\title{\bf A Randomized Exchange Algorithm for Computing Optimal Approximate Designs of Experiments}
\author[1,2]{Radoslav Harman}
\author[1]{Lenka Filov\'{a}}
\author[3,4,5]{Peter Richt\'{a}rik}
\affil[1]{Comenius University in Bratislava, Slovakia}
\affil[2]{Johannes Kepler University Linz, Austria}
\affil[3]{King Abdullah University of Science and Technology, Kingdom of Saudi Arabia}
\affil[4]{University of Edinburgh, United Kingdom}
\affil[5]{Moscow Institute of Physics and Technology, Russia}
\def\a{\mathbf a}
\def\d{\mathbf d}
\def\e{\mathbf e}
\def\f{\mathbf f}
\def\g{\mathbf g}
\def\k{\mathbf k}
\def\l{\mathbf l}
\def\x{\mathbf x}
\def\y{\mathbf y}
\def\w{\mathbf w}
\def\I{\mathbf I}
\def\M{\mathbf M}
\def\R{\mathbb R}
\def\V{\mathbf V}
\def\X{\mathfrak X}
\def\1{\mathbf 1}
\def\0{\mathbf 0}
\def\tr{\mathrm{tr}}
\def\eff{\mathrm{eff}}
\def\supp{\mathrm{supp}}
\def\argmax{\mathrm{argmax}}
\def\argmin{\mathrm{argmin}}
\newtheorem{theorem}{Theorem}
\newtheorem{proposition}{Proposition}
\newtheorem{lemma}{Lemma}
\def\cL{\mathcal L}
\newcommand{\eqdef}{:=}
\begin{document}
\maketitle

\begin{abstract}
	We propose a class of subspace ascent methods for computing optimal approximate designs that covers both existing as well as new and more efficient algorithms. Within this class of methods, we construct a simple, randomized exchange algorithm (REX). Numerical comparisons suggest that the performance of REX is comparable or superior to the performance of state-of-the-art methods across a broad range of problem structures and sizes. We focus on the most commonly used criterion of D-optimality that also has applications beyond experimental design, such as the construction of the minimum volume ellipsoid containing a given set of data-points. For D-optimality, we prove that the proposed algorithm converges to the optimum. We also provide formulas for the optimal exchange of weights in the case of the criterion of A-optimality. These formulas enable one to use REX for computing A-optimal and I-optimal designs.
\end{abstract}

\textbf{Keywords:} optimal approximate design of experiments, D-optimality, A-optimality, I-optimality, convex optimization, minimum volume ellipsoid

\section{Introduction}

The topic of this paper is the computation of optimal approximate designs for regression models with uncorrelated errors (e.g., \cite{Fedorov}, \cite{Pazman},  \cite{Puk}, \cite{Atkinson}). As a special case, we also briefly discuss the minimum volume enclosing ellipsoid problem (e.g., \cite{Todd-book}).
\bigskip

Suppose that we intend to perform an experiment consisting of a set of trials. Assume that the observed response of each trial depends on a design point $x$ chosen from a finite set $\mathfrak{X}$. For simplicity but without the loss of generality, we will assume that $\X=\{1,\ldots,n\}$. Note that in the experimental design problems, $n$ can be a large number, often many thousands. 
\bigskip

For $x \in \mathfrak{X}$, the real-valued observation $Y(x)$ is assumed to satisfy the linear\footnote{It is also straightforward to apply the algorithms proposed in this paper to the computation of locally-optimal designs for \emph{non-linear} regression models. In this sense, we gain clarity and do not lose generality if we formulate our method for the linear regression (see \cite{Atkinson}, chapter 17).} regression model $Y(x)=\f'(x)\beta +\varepsilon(x)$, where $\f(x) \in \mathbb{R}^m$ is the known regressor associated with $x$, the vector $\beta \in \mathbb{R}^m$ contains the unknown parameters of the model, and $\varepsilon(x) \sim N(0,\sigma^2)$, with $\sigma^2>0$, is an unknown variance.\footnote{We assume homoskedasticity and the normality of errors for the sake of simplicity. Our algorithms can be applied to all models with uncorrelated $\epsilon(x)$, $x \in \X$, with finite variances, after a simple transformation of the problem (\cite{Atkinson}, chapter 23).} For different trials, the errors are assumed to be uncorrelated. Let the model be non-singular in the sense that $\{\f(x): x \in \mathfrak{X}\}$ spans $\mathbb{R}^m$. We also avoid redundant regressors by assuming that $\f(x) \neq \0_m$ for all $x \in \X$. Note that in the applications, the number $m$ of the parameters tends to be relatively small, mostly less than $10$.
\bigskip

We formalize an (approximate experimental) design on $\X$ as an $n$-dimensional vector $\w$ with non-negative real components summing to one.\footnote{Often, an approximate experimental design is formalized as a probability measure on $\X$, but for a finite design space, the representations by a probability measure and by a vector of probabilities are equivalent.} From the point of view of an experimenter, the component $w_x$ of $\w$ represents the proportion of the trials to be performed in the design point $x \in \X$. The support of a design $\w$ is $\supp(\w)=\{x \in \X: w_x>0\}$.\footnote{Note that there are theoretical reasons corroborated by extensive numerical evidence that the ``optimal'' designs $w$ tend to be sparse in the sense of having support much smaller than $n$. More precisely, there always exists an optimal design supported on a set of a size that does not exceed $1+m(m+1)/2$.} The set of all designs on $\mathfrak{X}$ denoted by $\Xi$ is the probability simplex in $\R^n$, which is compact and convex.
\bigskip

The information matrix associated with a design $\w$ is defined by
\begin{equation*}\label{infmat}
\M(\w) \eqdef \sum_{x\in \mathfrak{X}}w_x \f(x)\f'(x).
\end{equation*}
Under our model's assumptions, the information matrix is proportional to the Fisher information matrix corresponding to $\beta$. Therefore, the general aim is to choose $\w$ such that $\M(\w)$ is ``as large as possible'', which we make precise next. 
\bigskip

Let $\mathcal{S}^m_+$ be the set of $m\times m$ symmetric non-negative definite matrices, and let $\Phi: \mathcal{S}^m_+ \to \mathbb{R} \cup \{-\infty\}$ be a criterion of optimality, that is, a function measuring the ``size'' of information matrices. A design $\w^*$ is said to be a $\Phi$-optimal design if it maximizes $\Phi(\M(\w))$ in the class $\Xi$ of all designs:
\begin{equation}\label{eq:OPT} 
\w^* \in \argmax\{\Phi(\M(\w)): \w\in \Xi\}.
\end{equation} 
Matrix $\M(\w^*)$ is referred to as the $\Phi$-optimal information matrix.
\bigskip

Due to their natural statistical interpretations, the two most common optimality criteria are $D$-optimality and $A$-optimality. In this paper, we use them in the forms (e.g., \cite{Puk}) $\Phi_D(\M)=(\det(\M))^{1/m}$ and $\Phi_A(\M)=(\mathrm{tr}(\M^{-1}))^{-1}$, respectively, for any positive definite matrix $\M$. For a singular non-negative definite matrix $\M$, the values of the criteria are defined to be $0$. Functions $\Phi_D$ and $\Phi_A$ are positively homogeneous, continuous, and concave on the set of all non-negative definite matrices. Hence, for both $D$- and $A$-optimality, the optimal design always exists, and the optimal information matrix is non-singular. 
\bigskip

Let $\Xi_R$ denote the set of regular designs, i.e.,
\begin{equation*}
  \Xi_R=\{\w \in \Xi: \M(\w) \text{ is non-singular}\}.
\end{equation*}
For this paragraph, let $\w \in \Xi_R$ be fixed. The variance function of $\w$ is the $n$-dimensional vector $\d(\w)$ with components
\begin{equation*}
  d_x(\w)= \f'(x)\M^{-1}(\w)\f(x), \: x \in \X.
\end{equation*}
Therefore, $\d(\cdot)$ is a function that maps regular designs to $\R^n$. From the statistical point of view, $d_x(\w)$ is proportional to the variance of the best linear unbiased estimator (BLUE) of $\f'(x)\beta$, provided that the approximate design $\w$ can be actually carried out (hence the term). We also note that
\begin{equation}
  d_x(\w)=\lim_{\alpha \to 0_+} \frac{\Psi_D[(1-\alpha)\M(\w)+ \alpha\M(\e_x)]-\Psi_D(\M(\w))}{\alpha} + m,
\end{equation}
where $\Psi_D(\M)=\log\det(\M)$ is a version of the $D$-optimality criterion and $\e_x$ is the singular design in $x$ (formally the $x$-th standardized unit vector, i.e., the vertex of $\Xi$). This explains why $\d$ is often used in iterative algorithms for computing the $D$-optimal designs to select ``the most promising direction'' of change of the current design.   
\bigskip

For $A$-optimality, because the directional derivative of $\Psi_A(\M)=-\tr(\M^{-1})$ in $\M(\w)$ in the direction of $\M(\e_x)$ is $\f'(x)\M^{-2}(\w)\f(x)-\tr(\M^{-1}(\w))$, we can define an $n$-dimensional vector $\a(\w)$ with components
\begin{equation*}
 a_x(\w)=\f'(x)\M^{-2}(\w)\f(x), \: x \in \X
\end{equation*}
as a quantity analogous to the variance function in the case of $D$-optimality. 
\bigskip

In the following, let $\g(\w)$ be either $\d(\w)$ or $\a(\w)$, depending on the criterion under consideration. Besides indicating a degree of importance of design points, another reason for utilizing the vector $\g(\w)$ is that $\max_x g_x(\w)$ can be used to compute a natural stopping rule for algorithms based on the notion of statistical efficiency, as follows.
\bigskip

The $D$- and $A$-efficiencies of a design $\w$ relative to $\bar{\w}\in \Xi_R$ are (see \cite{Puk}) 
\begin{equation*}
\mathrm{eff}_D(\w \:|\: \bar{\w}) \eqdef \frac{\Phi_D(\M(\w))}{\Phi_D(\M(\bar{\w}))}, \qquad
\mathrm{eff}_A(\w \;|\; \bar{\w}) \eqdef \frac{\Phi_A(\M(\w))}{\Phi_A(\M(\bar{\w}))}.
\end{equation*}
Because $\Phi_D$ and $\Phi_A$ are positively homogeneous, the notion of efficiency as defined above can be interpreted in terms of the relative numbers of trials needed to achieve a given criterion value. For instance, $\mathrm{eff}_D(\w|\bar{\w})=0.99$ means that, asymptotically, to achieve the same amount of information measured by the criterion of $D$-optimality, we only need $99\%$ of the trials using the design $\bar{\w}$ compared to number of trials needed if we use the design $\w$. Let $\w \in \Xi_R$. If $\w^D$ is a $D$-optimal design and $\w^A$ is an $A$-optimal design, then (see \cite{Puk}, Ch.6)
\begin{equation}\label{eq:effbnd}
\mathrm{eff}_D(\w \;|\; \w^D) \geq \frac{m}{\max_{x \in \X} d_x(\w)}, \qquad
\mathrm{eff}_A(\w \;|\; \w^A) \geq \frac{\tr(\M^{-1}(\w))}{\max_{x \in \X} a_x(\w)}.
\end{equation}
If $\w$ is the current design and its efficiency compared to the optimal design is almost equal to $1$, there is no practical reason to continue the computation.

\subsection{Methods for computing optimal designs of experiments}

In this paper, we are concerned with the numerical computation of (approximate) $\Phi$-optimal designs as given in \eqref{eq:OPT}, with special focus on $D$- and $A$-optimality. Within the field of optimal experimental design, the first contributions to solving this problem were made by V. V. Fedorov and H. Wynn (e.g., \cite{Fedorov} and \cite{Wynn}), who developed the so-called vertex direction methods (VDMs) that are closely related to the Frank-Wolfe algorithm. In each iteration, VDMs move the current design $\w$ in the direction of $\e_x$ for some design point $x$ while decreasing all remaining components of $\w$ by the same proportion. Here, $x$ can be chosen to maximize $g_x(\w)$. Although these methods converge, some of them monotonically, they tend to be inconveniently slow. More efficient variants (e.g., \cite{Atwood73}) also allow the decrease of a single component of the current design. A related alternative approach called the vertex exchange method for $D$-optimality (VEM) was proposed by Bohning~\cite{Bohning}.
\bigskip

The VEM algorithm (see Algorithm~\ref{bex} for a pseudo-code) is one of the simplest special cases of the class of algorithms presented in this paper. In each iteration, with a current design $\w$, VEM selects a pair of points $k, l \in \X$ defined by 
\begin{equation*}
  k \in \argmin\{d_u(\w): u \in \supp(\w)\}, \: l \in \argmax\{d_v(\w): v \in \X\}.
\end{equation*}
We call such a pair $(k,l)$ a Bohning's pair of design points. Then, the VEM computes $\alpha^* \in [-w_l,w_k]$ such that $\Phi_D$ in $\M(\w+\alpha\e_l-\alpha\e_k)$ is maximized. Such an $\alpha^*$, which we call the Bohning's step, can be analytically computed. After each exchange, the variance function $\d$ is recomputed. 
\bigskip

\begin{algorithm}
	\SetKwInOut{Input}{input}
	\SetKwInOut{Output}{output}
	\SetKwRepeat{Do}{do}{while}
	\Input{Regressors $\f(1),\ldots,\f(n)$ of the model, required efficiency $\mathit{eff}$, maximum time $t.max$}
	\Output{Approximate design $\w$} 
	\BlankLine 
	Generate a regular $m$-point design $\w$;\\
	\While {$\mathit{eff.act}(\w) < \mathit{eff} \mathrm{ and } \: time<t.max$}
	   {Let $k$ belong to $\argmin\{d_u(\w): u\in\supp(\w)\}$\\
	    Let $l$ belong to $\argmax\{d_v(\w): v\in \X\}$\\
		Let $\alpha^*$ belong to $\argmax\{\Phi_D(\M(\w+\alpha\e_l-\alpha\e_k)): \alpha \in [-w_l, w_k]\}$\\
		$w_k \leftarrow w_k - \alpha^*$; $w_l \leftarrow w_l + \alpha^*$  	
	}
	\caption{A variant of the Bohning's vertex exchange method (VEM). The value of $\mathit{eff.act}(\w)$ is calculated as a lower bound on the efficiency of the current design $\w$ based on a standard formula (see \eqref{eq:effbnd}). The variable $time$ measures the time elapsed from the start of the computation.} \label{bex} 
\end{algorithm}

A substantially different strategy for solving the problem \eqref{eq:OPT}, involving simultaneously updating all components of $\w$, is implemented in the multiplicative algorithm (MUL). The MUL and its variants can be attributed to \cite{STT}; extensions can be found in \cite{DPZ} and \cite{Harman}.
\bigskip

The three methods of VDM, VEM and MUL were more recently combined by Yu (\cite{Yu}) in the ``cocktail'' algorithm for $D$-optimality. This approach often increases the speed while preserving convergence.
\bigskip

Another efficient recent method is presented in \cite{YBT}. It is the simplicial decomposition algorithm where the master non-linear problem is solved by a second-order Newton method and can be used for various twice-differentiable concave criteria, including both $D$- and $A$-optimality. See also \cite{Dariusz} for an earlier application of simplicial decomposition in the optimal design of experiments and further references.
\bigskip

The problem of searching for an optimal design can be also solved by specific mathematical programming methods. Namely, in \cite{VBW}, it is shown that the problem of $A$-optimality can be cast as semidefinite programming, while $D$-optimality can be reduced to maxdet programming. More recently, the work \cite{Sagnol} enables one to solve both $A$- and $D$-optimal design problems (for the approximate design case without additional constraints\footnote{For an improvement that also enables solving exact design problems and approximate design problems with non-standard constraints using the second-order cone programming, see \cite{SagnolHarman}.}) by the second-order cone programming approach.
\bigskip

The optimal design problem is closely related to the minimum volume enclosing ellipsoid (MVEE) problem. Considerable attention has been paid to developing fast algorithms for the MVEE problem in mathematical programming and optimization literature. For the latest developments, see \cite{Todd}, \cite{selin2}, and \cite{Todd-book}.
\bigskip

Consider a set $\{\f_1,\ldots,\f_n\}$ spanning $\R^m$. Then, the task of finding MVEE $\mathcal{E}(\M):=\{\f\in\R^m: \f'\M \f\leq 1\}$, $\M \in \mathcal{S}_+^m$, containing $\f_1,\ldots,\f_n$ can be cast as
\begin{equation*}
\left.\begin{array}{rl}
\min_{\M\in \mathcal{S}^m_{++}} & -\log\det\M \\
\hbox{subject to} & \f'_i\M \f_i\leq 1,\ i=1,\ldots,n
\end{array}\right.
\end{equation*}
with the dual problem
\begin{equation*}
\left.\begin{array}{rl}
\max_{\w\geq 0} & \log\det\sum_{i=1}^n w_i \f_i \f'_i \\
\hbox{subject to} & \sum_{i=1}^n w_i=1.
\end{array}\right.
\end{equation*}
Thus, the problem of finding the $D$-optimal design is equivalent to the MVEE problem (see, e.g., \cite{Todd-book}), making the task of developing a fast and efficient algorithm for computing $D$-optimal designs relevant for a wider optimization community.
\bigskip

With $D$-optimality being the most popular criterion, the literature focusing on the computational issues of $A$-optimality is scarcer, although in some areas, the use of $A$-optimality is very natural. In particular, the computation of the important $I$-optimal designs\footnote{These are occasionally called $IV$- or $V$-optimal designs.} (e.g., \cite{CookNachtsheim}, \cite{GJS}) can be converted into $A$-optimality (cf.  \cite{Atkinson}, Section 10.6). That is, $I$-optimal designs on a finite design space can also be computed using the algorithm developed in this paper. In addition to the mathematical programming methods mentioned above, a generalization of the vertex direction method that focuses on $A$-optimality, together with an analytical formula for the optimal VDM step-length, is presented in~\cite{selin}. 

\subsection{Summary of contributions}

In Section \ref{S:sam}, we introduce a general class of methods for finding optimal approximate designs that we call the subspace ascent method (SAM) and describe some known algorithms as special cases.
\bigskip

In Section \ref{S:rex}, we present our key method:  REX (randomized exchange method). REX is a simple batch-randomized exchange algorithm that is a member of the SAM family and whose performance is superior to the state-of-the-art algorithms in nearly all problems we tested. The proposed algorithm can be viewed as an efficient extension or combination of both the VEM algorithm and the KL exchange algorithm that is used to compute exact designs (see \cite{Atkinson}). In the same section, we compare our method to known subspace ascent/descent methods that were recently developed in the optimization and machine learning communities by highlighting the similarities and differences. The numerical comparison of the state-of-the art algorithms with this newly proposed method is the subject of Section \ref{S:ex}.  
\bigskip

In the Appendix, we provide a proof of convergence of the proposed algorithm in the case of $D$-optimality. Additionally, in the existing literature, many of the mentioned methods have only been tailored to $D$-optimality, and although the transition to $A$-optimality is possible, it is not trivial. Therefore, in the Appendix, we also present formulas for the optimum exchange of weights for $A$-optimality, which, to the best of our knowledge, were previously only derived for the $D$-optimality criterion.

%%%%%%%%%%%%%%%%%%%%%%%%%%%%%%%%%%%%%%%%%%%%%
\section{Subspace ascent method}\label{S:sam}

In Section~\ref{subsec:SAM}, we propose a generic algorithmic paradigm for solving \eqref{eq:OPT}. It helps illustrate several existing approaches and the newly proposed method from a broader perspective. We refer to this paradigm as the ``Subspace Ascent Method'' (SAM), formalized as Algorithm~\ref{alg:SAM}. Subsequently, in Section~\ref{sub:rand}, we briefly comment on the context of the ``subspace ascent'' idea within existing optimization, linear algebra and machine learning literature.

\subsection{SAM}\label{subsec:SAM}

SAM (Algorithm~\ref{alg:SAM}) is an iterative procedure for finding an approximate $\Phi$-optimal design. We start with an arbitrary regular design $\w^0\in \Xi$. In iteration $k$, a subset $S_k$ of the design points is chosen via a certain rule (e.g., a small random subset of all design points). In this iteration, we only allow for weights $w_x$ for $x\in S_k$  to change. All other weights are frozen at their last values. That is, we are deliberately reducing our search for $\w^{k+1}$ to a subset of the set of all designs $\Xi$ at the intersection of $\Xi$ and a particular affine subspace of $\R^n$:
\[\Xi_k \eqdef  \Xi \cap \cL_k,
\qquad  \cL_k \eqdef \{\w\;:\; w_x = w^k_x \text{ for all } x\notin S_k\},\]
Note that $\w^k \in \Xi_k$ for all $k$ and, therefore, $\max_{\w\in \Xi_k} \Phi(\M(\w)) \geq \Phi(\M(\w^k))$. In particular, there exists $\w^{k+1}\in \Xi_k$ such that the ascent condition $\Phi(\M(\w^{k+1})) \geq \Phi(\M(\w^k))$ is satisfied.

\begin{algorithm}
\vspace{0.5cm}
\SetKwInOut{Input}{input}
\SetKwInOut{Output}{output}
\Input{Initial regular design $\w^0\in \Xi$}
\Output{Approximate design $\w^k$}
Set $k \leftarrow 0$ \\
\While{$\w^k$ does not satisfy a stopping condition}
{Select a subset $S_k$ of the set of design points $\X=\{1,2,\dots,n\}$\\
 Define the \emph{active subspace} of $\Xi$ as $\Xi_k \eqdef \{\w\in \Xi : w_x = w^k_x \text{ for all } x\notin S_k\}$\\
Compute $\w^{k+1}$ as an approximate solution of $
\max_{\w \in \Xi_k} \Phi(\M(\w))$ satisfying the \emph{ascent condition} $\Phi(\M(\w^{k+1})) \geq \Phi(\M(\w^k))$\\
Set $k \leftarrow k+1$
}
\caption{Subspace Ascent Method (SAM)}
\label{alg:SAM}
\end{algorithm}

If $|S_k|=1$, then $\Xi_k$ is a singleton and the method cannot progress. As soon as $|S_k|> 1$, this problem is removed, and $\Xi_k$ has a chance to contain points better than $\w^k$.
\bigskip

SAM is a generic method in the sense that it does not specify how the set $S_k$ is chosen or how the approximate solution $\w^{k+1}$ is obtained. Formally, any optimum design algorithm for a discrete design space (including VDMs and the MUL) is a special case of SAM if we allow the choice $S_k=\X$. However, the intended philosophy of the SAM approach is to make the sets $S_k$ much smaller than $\X$. This makes the partial optimization problem small-dimensional and potentially simple to solve, at least approximately.
\bigskip

The algorithm VEM chooses $S_k$ to be a two point set, which allows the partial optimization problem to be analytically solved for $D$-optimality (and, as we show in the Appendix, for $A$-optimality). However, this method requires overly frequent updates of the set $S_k$, which is a computationally non-trivial operation for large $n$. Therefore, the speed of VEM significantly declines with the increase of the design space. 
\bigskip

Similarly, SAM also includes the algorithm YBT (\cite{YBT}), which chooses $S_k$ to be the support of the current design enriched by one additional support point $x$. Then, a partial optimization problem is solved by a specific constrained Newton method. However, this approach is generally inefficient for larger values of $m$ because then the sets $S_k$ tend to be large and the efficiency of the Newton method deteriorates rapidly as the dimension of the problem increases (as demonstrated in Section \ref{S:ex}).  
 
%%%%%%%%%%%%%%%%%%%%%%%%%%
\section{Randomized exchange algorithm}\label{S:rex}

In this section, we propose a particular case of SAM for which we coin the name {\em randomized exchange algorithm (REX)}.  Then, we briefly comment on the connection of this method to the existing literature on mathematical optimization and machine learning.

\subsection{REX}

Let $\w$ be a regular design and $\g(\w)$ be an $n$-dimensional vector with components $g_x(\w)$, as defined in the introduction. Recall that the value $g_x(\w)$ is a $\w$-based estimate of the plausibility that $x$ is the support point of an optimal design.  
\bigskip

The general idea behind the proposed batch-randomized algorithm is to repeatedly select a batch of several pairs of design points (the number of pairs is not fixed and may vary from iteration to iteration) and randomly perform optimal exchanges of weights between the pairs of design points.  The selection of the batch depends on the vector $\g(\w)$ evaluated in the best available design $\w$. Its size can be fine-tuned by a tuning parameter $\gamma \geq 1/m$. We will call the resulting algorithm the randomized exchange algorithm (REX). We now proceed to a description of the REX algorithm (See Algorithm \ref{rex} for a concise pseudo-code). 

\begin{enumerate}
  \item {\bf LBE step.} Given the current design $\w$, compute $\g(\w)$ and subsequently perform the ``leading Bohning exchange'' (LBE) as follows:
  \begin{equation}\label{eqn:LBE}
   \w \leftarrow \w + \alpha^*_{k,l}(\w)(\e_l-\e_k),
  \end{equation}
where
  \begin{eqnarray*}
   k &\in& \argmin\{g_u(\w): u \in \supp(\w)\},\\
   l &\in& \argmax\{g_v(\w): v \in \X\},\\
   \alpha^*_{k,l}(\w) &\in& \argmax\{\Phi(\M(\w+\alpha(\e_l-\e_k))): \alpha \in [-w_l, w_k]\}
  \end{eqnarray*}
  An optimal step $\alpha^*_{k,l}(\w)$ is called ``nullifying'' if it is equal to either $-w_l$ or $w_k$. 
  \item {\bf Subspace selection.} Subspace $S$ of $\X$ in which the method will operate arises as the union of two sets. One is formed by a greedy process ($S_{\text{greedy}}$) informed by the elements of the vector $\g(\w)$, and the other is identical to the support of the current design $\w$ ($S_{\text{support}}$).
  \begin{enumerate}
  \item {\bf Greedy.} Set $L=\min(\gamma m,n)$, and choose points \[S_{\text{greedy}} = \{l^*_1,\ldots,l^*_L\} \subseteq \X,\] where $l^*_i$ corresponds to the $i$th largest component of the vector $\g(\w)$. 
  \item {\bf Support.} Set \[S_{\text{support}} = \supp(\w).\] Let $K$ be the size of $\supp(\w)$: $K=|\supp(\w)|$. 
  \item {\bf Active subspace.} The active set of design points is defined as
  \[S = S_{\text{greedy}} \cup S_{\text{support}}.\]
  The weights of no other design points are modified in this iteration. Note that if $m^2\geq n$ or if $K = n$, then $S=\X=\{1,2,\dots,n\}$. However, a standard situation is $|S|<n$, and our method operates in a proper subspace of the full design space $\X$.
 \end{enumerate}
 
\item {\bf Subspace step.} We now perform a specific update of the design points in $S$. That is, we allow $w_v$ for $v \in S$ to be updated. Elements $w_v$ for $v\notin S$ are kept unchanged.

\begin{enumerate}
\item {\bf Formation of pairs.} Let $(k_1,\ldots,k_K)$ be a uniform random permutation of $S_{\text{support}}$, and let $(l_1,\ldots, l_L)$ be a uniform random permutation of $S_{\text{greedy}}$. Form the sequence
\begin{equation}(k_1,l_1), (k_2,l_1),\dots, (k_K, l_1), \dots, (k_1, l_L), (k_2, l_L), \dots, (k_K,l_L) \label{eq:rand_perm}\end{equation} of $K\times L$ pairs of active design points.

  \item {\bf Update.} If the LBE step was nullifying, sequentially perform all nullifying $\Phi$-optimal exchanges between the $K\times L$ pairs in \eqref{eq:rand_perm} with the corresponding updates of $\w$ and $\M(\w)$.  If the LBE was {\em not} nullifying in the current iteration, sequentially perform all $\Phi$-optimal  exchanges between the $K\times L$ pairs in \eqref{eq:rand_perm} with the corresponding updates of $\w$ and $\M(\w)$.
 
  \end{enumerate}  
  
   \item {\bf Stopping rule.} If a stopping rule is satisfied, stop. Otherwise, iterate by returning to step~1.
\end{enumerate}

We remark that for the criteria of $D$- and $A$-optimality, we have rapid explicit formulas providing the optimal exchanges in steps 1 and 3b of REX; see Appendix~\ref{appendix}. However, in principle, REX can also be applied to any other concave (even non-differentiable) optimality criterion using numerical procedures for the one-dimensional convex optimization on an interval, provided that we find a suitable analogy to the function $\g(\cdot)$.
\bigskip

REX makes explicit use of the fast optimal weight exchange between two design points, which makes it much less dependent on $m$ than YBT. Moreover, REX requires less frequent re-computations of the function $\g(\w)$, which leads to significant computational savings compared to VEM for large $n$. Note that REX is very simple. In particular, it is simpler than the hybrid cocktail algorithm of Yu \cite{Yu}. Moreover, unlike the method of \cite{Yu}, the result does not depend on the ordering of the design points. Finally, as we demonstrate, in most cases, it is also numerically more efficient than both of the state-of-the art methods proposed in \cite{Yu} and \cite{YBT}. 

\begin{algorithm}[t!]
	\SetKwInOut{Input}{input}
	\SetKwInOut{Output}{output}
	\SetKwRepeat{Do}{do}{while}
	\Input{Regressors $\f(1),\ldots,\f(n) \in \R^m$, parameter $\gamma \geq 1/m$, threshold efficiency $\mathit{eff}$, maximum time $t.max$}
	\Output{Approximate design $\w$} 
	\BlankLine 
	Generate a random regular design $\w$\\
	\While {$\mathit{eff.act}(\w) < \mathit{eff}$ and $time<\mathit{t.max}$}
	{Perform the LBE in $\w$ as given by \eqref{eqn:LBE}.\\
	Let $\k$ be the vector corresponding to a random permutation of the $K$ elements of $\supp(\w)$\\
	Let $\l$ be the vector corresponding to a random permutation of the indices of the $L=\min(\gamma m,n)$ greatest elements of $\g(\w)$\\
	\For {$l$ in $1:L$}
		{\For{$k$ in $1:K$}
			{Find $\alpha^*$ in $\argmax\{\Phi(\M(\w+\alpha(\e_{\l_l}-\e_{\k_k}))): \alpha \in [-w_{\l_l}, w_{\k_k}]\}$\\
		     \If{the LBE was not nullifying or $\alpha^*=-w_{\l_l}$ or $\alpha^*=w_{\k_k}$} {
			   $w_{\k_k} \leftarrow w_{\k_k} - \alpha^*$ 
			   $w_{\l_l} \leftarrow w_{\l_l} + \alpha^*$
			   }
		    }
		}
	}
	\caption{Randomized exchange algorithm (REX). The value of $\mathit{eff.act}(\w)$ is the lower bound on the current design efficiency compared to the optimal design. For $D$- or $A$-optimality, it is possible to use formulas \eqref{eq:effbnd}. The variable $time$ determines the time from the beginning of the computation. The optimal step-length $\alpha^*$ can be computed by explicit formulas given in the appendix (in the case of $D$- or $A$-optimality) or by a procedure for the one-dimensional convex optimization on a finite interval.}\label{rex} 
\end{algorithm}

\subsection{Connection to existing literature on subspace descent methods} \label{sub:rand}

Subspace ascent/descent methods of various types have recently been extensively studied in the optimization \cite{Nes12, PCDM, APPROX}, linear algebra \cite{LL10, LeeSid13, GR-SIMAX}, machine learning \cite{SDCA,  Quartz, mS2GD} and image processing \cite{SCP} literature.  Of particular importance are the {\em randomized} and {\em greedy} variants.

 Randomized subspace descent methods operate by taking steps in {\em random} subspaces of the domain of interest according to some  distribution over subspaces (all of which typically have a relatively small and fixed dimension) that is fixed throughout the iterative process. Once a subspace has been identified, a gradient, Newton or a proximal point step is usually taken in the subspace. These methods have been found to scale to extremely large dimensions for certain applications \cite{PCDM, HYDRA} (e.g., billions of unknowns), and have become the state of the art for a variety of big data analysis tasks.  Randomized selection rules are, in practice, often better than cyclic rules that pass through the subspaces in a fixed predefined order. Intuitively, this happens because randomization helps avoid/break unfavorable orderings. It was recently shown by Sun and Ye~\cite{Ye2016} that there is a large  theoretical gap between randomized and cyclic rules (proportional to the square of the dimension of the problem), which favors randomized rules. 

Greedy subspace descent methods operate by taking steps in subspaces selected {\em greedily} according to some potential/importance function of interest. Again, once a subspace has been identified, typically a gradient, Newton or a proximal point step is taken in the subspace.  The ideal greedy method employs the {\em maximum improvement} rule, which requires that at any given iteration, one should choose the subspace that leads to the maximum improvement in the objective. Since this is almost always impractical, one needs to resort to approximating the maximum improvement rule by an efficiently implementable proxy rule.  Whether one should use a greedy or a randomized method depends on the problem structure, as both have applications in which they dominate. 

Let us now highlight some of the key similarities and differences between REX and existing subspace descent methods:
\begin{enumerate}

\item {\bf Support plays a role.} The subspace selection rule in REX has a {\em greedy} component through the inclusion of the set $S_{\text{greedy}}$. However, unlike most existing subspace descent methods, it also has an ``active set / support'' component through the inclusion of $S_{\text{support}}$.

\item {\bf Greedy subspaces.} Greedy subspace rules beyond subspaces of dimension one have not been explored in the optimization and machine learning literature. To the best of our knowledge, the only exception is the work of Csiba and Richt\'{a}rik~\cite{CsibaPL}. However, both their greedy subspace rule and the problem that they tackle is different from ours.

\item {\bf Subspace step.} While existing subspace descent methods typically rely on traditional deterministic (as opposed to stochastic) optimization steps such as gradient or Newton  steps, the REX performs a {\em randomized} pairwise exchange step in the subspace. We perform pairwise exchange steps as for the most important optimal design criteria. These can be calculated exactly in closed-form, which facilitates fast computations. This is similar to the sequential minimal optimization technique of Platt \cite{plattSMO} that is influential in the machine learning literature. Methods with randomized steps performed in the subspace, such as REX, are rare in the optimization literature. One of the earliest examples of such an approach is the CoCoA framework of Jaggi et al.\ \cite{CoCoA}, aimed at distributed primal-dual optimization in machine learning. Both their subspace subroutine and the problem that they tackle are very different from ours.

\item {\bf Random reshuffling.} Randomization in REX is present in the form of a {\em random permutation} of pairs from $S_{\text{support}}\times S_{\text{greedy}}$. Methods relying on random permutations can be seen as a hybrid between stochastic and cyclic methods. These approaches are not theoretically understood due the intrinsic difficulty in capturing their behaviors using complexity analysis.  The work of G\"{u}rb\"{u}zbalaban et al.\ \cite{randomshuffle} provides one of the first insights into this problem. However, their techniques do not apply to our problem or algorithm. 

\item {\bf Nonseparable constraints.} Virtually all modern stochastic and greedy subspace descent methods in optimization apply to unconstrained problems\footnote{Problems with a convex constraint admitting a fast (i.e., closed form) projection are considered equally simple as unconstrained problems.}, or to problems with separable constraints structure, as this property is crucial in the analysis of these methods. One of the early works that explicitly tackles non-separable constraints is that of Necoara et al. \cite{Necoara-constraints}. However, their work applies to linear constraints only. 

\item {\bf Smoothness.} Vast majority of papers on stochastic and greedy subspace descent methods assume that the objective function has a Lipschitz continuous gradient (this property is often called $L$-smoothness in the machine learning literature). However, this is not true in our case. Hence, fundamentally different convergence analysis techniques are needed in our case. Only a handful of subspace descent methods do not rely on this assumption. One of them is the stochastic Chambolle-Pock algorithm \cite{SCP}, which is a stochastic extension of a famous state-of-the-art method in the area of computational imaging: the Chambolle-Pock method.

\end{enumerate}

%%%%%%%%%%%%%%%%%%%%%%%%%%%%
\section{Numerical comparisons}\label{S:ex}

In this section, we compare the performance of an \texttt{R} (\cite{R}) implementation of REX and  two state-of-the-art algorithms for computing the $D$-optimal design of experiments. As a first candidate,
we used the cocktail algorithm (CO), more precisely its \texttt{R} implementation, which was made available by the author of CO (see \cite{Yu}).\footnote{There are two variants of CO. One has a pre-specified neighborhood structure, and the other has nearest neighbors computed on the fly. For each model, we selected the variant that performed better.} As a second algorithm, we selected the Newton-type method \cite{YBT} (YBT). For a fair comparison, the \texttt{SAS} code kindly provided by the authors of YBT was converted into \texttt{R} with as high fidelity as possible. The \texttt{R} codes of REX are available at
\bigskip

\hyperlink{http://www.iam.fmph.uniba.sk/ospm/Harman/design/}{http://www.iam.fmph.uniba.sk/ospm/Harman/design/}.
\bigskip

Note that it is possible to directly use the codes without any commercial software and without technical knowledge of the algorithms. 
\bigskip

In the following two subsections, we compare the competing algorithms for two very different, generic $D$-optimum design problems with widely varying sizes. This approach provides comprehensive information about the relative strengths of the methods in a broad range of situations. However, we remark that the numerical comparisons must be taken with a grain of salt because they depend on a multitude of factors. Different implementations and different hardware capabilities can lead to somewhat different relative results. 
\bigskip

In all computations with REX, we chose the tuning constant $\gamma=4$, uniformly random $m$-point initial designs. We used a computer with a 64-bit Windows 10 operating system running an Intel Core i7-5500U CPU processor at $2.40$ GHz with $8$ GB of RAM.

\subsection{Quadratic models}

Consider the full quadratic regression model on the $d$-dimensional cube $[-1,1]^d$, $d \in \mathbb{N}$. In other words, for $(t_1,\ldots,t_d)' \in [-1,1]^d$ the observations are modeled as
\begin{equation}\label{quadmod}
Y(t_1,\ldots,t_d)=\beta_1 + \sum_{i=1}^d \beta_{i+1} t_i +\sum_{j=1}^{d} \sum_{k=j}^d \tilde{\beta}_{i,j} t_it_j + \varepsilon(t_1,\ldots,t_d),
\end{equation}
where $\tilde{\beta}_{1,1},\tilde{\beta}_{1,2},\ldots,\tilde{\beta}_{d,d}$ correspond to the parameters $\beta_{d+1},\beta_{d+2},\ldots,\beta_m$, $m=(d+1)(d+2)/2$. We discretized the cube $[-1,1]^d$ to obtain a lattice $\mathcal{D}$ of $n$ points, which are numbered by indices from $\X=\{1,\ldots,n\}$. Thus, the set of regressors $\{\f(1),\ldots,\f(n)\}$ corresponds to $\{(1,t_1,t_2,\ldots, t_{d-1}t_d, t_d^2)': (t_1,\ldots,t_d)' \in \mathcal{D}\}$. These are representatives of structured models, such as those used in the response surface methodology (see, e.g., \cite{rsm}). Each of the above mentioned algorithms was run for various combinations of values $n$ and $d$ (see Figure \ref{cube}).
\bigskip

\begin{figure}[!ht]
\begin{center}
	\includegraphics[width=14cm]{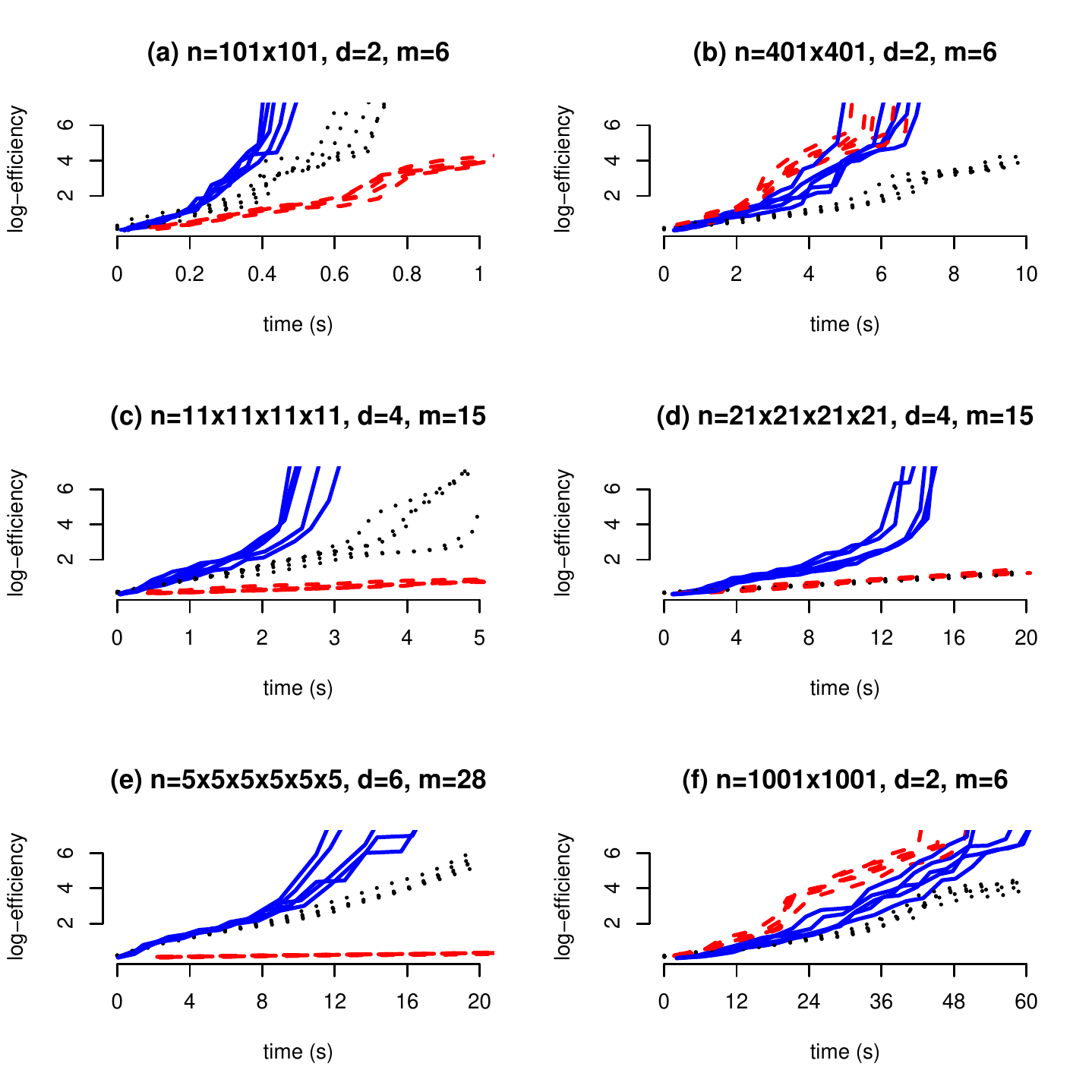}
\end{center}
	\caption{Typical performance of the algorithms CO (black dotted line), YBT (red dashed line) and REX (blue solid line) for $D$-optimality in the case of quadratic models \eqref{quadmod} on an $n$-point discretization of $[-1,1]^d$. The vertical axis denotes the log-efficiency $-\log_{10}(1-\eff)$, where $\eff$ is the $D$-efficiency of design. Thus, log-efficiencies $2$, $4$ and $6$ correspond to $D$-efficiencies $0.99$, $0.9999$ and $0.999999$, respectively. The horizontal axis corresponds to the computation time in seconds. Each algorithm was run $5$ times to illustrate the degree of variability.}\label{cube}
\end{figure}

The numerical results confirm the claim of \cite{YBT} that YBT is superior to CO for a large size $n$ of the design space and a small number $m$ of model parameters (see panels (b) and (f) of Figure \ref{cube}). However, for a smaller design space, particularly if the number of parameters is large, the algorithm CO tends to perform better than YBT (cf. panels (a), (c) and (e) of Figure \ref{cube}). A similar observation can also be drawn for other models, such as the one presented in the next subsection.
\bigskip

Importantly, the performance of REX is comparable or superior to both state-of-the-art algorithms for all combinations of $n$ and $m$ that we explored. 

\subsection{Random models}
As a second example, we chose a very different problem in which the regressors $\f(1),\ldots,\f(n)$ are drawn independently and randomly from the $m$-dimensional normal distribution $N_m(\0,\I)$.  These models represent unstructured and unordered optimal design situations in which the possible covariates of the model are drawn from a random population. The results are demonstrated in Figure \ref{random}.
\bigskip

\begin{figure}[!ht]
\begin{center}
	\includegraphics[width=14cm]{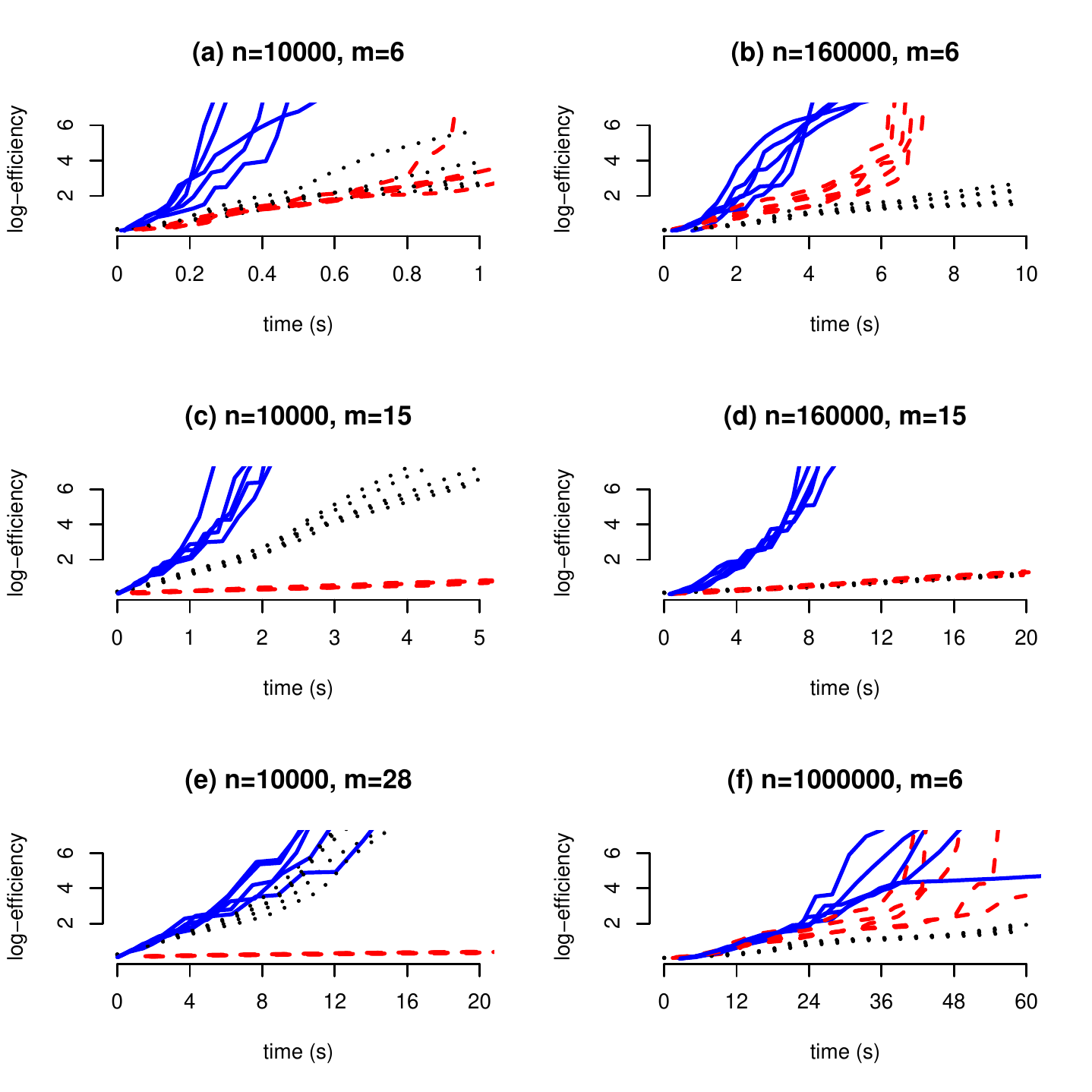}
\end{center}
	\caption{Typical performance of the algorithm CO (black dotted line), YBT (red dashed line) and REX (blue solid line) for $D$-optimality in the case of models with $n$ Gaussian random regressors of dimension $m$.  The vertical axis denotes the log-efficiency $-\log_{10}(1-\eff)$, where $\eff$ is the $D$-efficiency. The horizontal axis corresponds to the computation time. Each algorithm was run $5$ times.} 
	\label{random}
\end{figure}

The random models again show the superior performance of the proposed algorithm. The differences in favor of REX are even more pronounced compared to the quadratic models. This is possibly caused by the fact that on the unstructured set of regressors, procedures such as CO that depend on the ordering of design points are disadvantaged. Note that the algorithm REX is independent of the ordering of the design points.

\subsection{Further remarks on the numerical comparisons}

\textbf{$A$-optimality.} In addition to $D$-optimality, we performed a comparison of CO, YBT and REX for several problems under the criterion of $A$-optimality. Note that the algorithm CO was originally developed for $D$-optimality only, but the formulas given in the Appendix (Subsection \ref{Astep}) allowed us to modify it for the computation of $A$-optimal designs. For $A$-optimality, the relative efficiencies of the three algorithms were similar to the case of $D$-optimality, although the advantage of REX is generally less pronounced.
\bigskip

\noindent \textbf{Other algorithms for $D$-optimality.} For some models with a small number of design points and a very large number of parameters, the classical methods, particularly the multiplicative algorithm, outperform all three ``state-of-the-art'' algorithms (REX, YBT, CA), at least in an initial part of the computation. However, in a vast majority of test problems, the classical methods cannot compete with the three modern algorithms. A comprehensive comparison of all available methods is beyond the scope of this paper.   
\bigskip

\noindent \textbf{Disadvantages of REX.} We believe it to be important that the reader is informed of the disadvantages of our method we are aware of. First, the method occasionally generates streaks of only slowly increasing criterion values that are, however, usually followed by a rapid improvement. Second, we were not able to prove the convergence to the optimum\footnote{Note that the convergence of the criterion values per-se follows trivially from the monotonicity of REX. However, the convergence to the value \emph{optimal} for the problem at hand seems to be difficult to prove for a general concave criterion.} except for the case of $D$-optimality. Nevertheless, we also tested REX for other criteria, and the algorithm converged in all test problems.

%%%%%%%%%%%%%%%%%%%%%%
\section{Conclusions}

As we have numerically demonstrated, the randomized exchange algorithm (REX) largely outperforms recent state-of-the-art methods for computing $D$-optimal approximate designs of experiments, mostly in the cases where the model consists of randomly generated or otherwise unstructured regressors. Overall, in comparison to the competing methods, the performance of REX deteriorates much less when both the size of the design space and the number of parameters increase. It is also worth noting that REX is concise, has relatively low memory requirements, and does not utilize any advanced mathematical programming solvers. In addition, for $D$-optimality, we theoretically proved the convergence of REX to the correct optimum. To adapt the proposed algorithm to the $A$-optimality criterion, we derived appropriate vertex-exchange formulas. 
\bigskip

REX offers many possibilities for further development. Besides the exploration of its convergence properties for a general concave criterion, there is a space for improvement in an optimized, perhaps adaptive selection of the parameter $\gamma$ that regulates the batch size. One way to further increase the speed of the algorithm is to incorporate specific methods of initial design construction and the deletion rules of non-supporting points (see \cite{del} for $D$-optimality and \cite{Pronzato} for $A$-optimality).
\bigskip

Moreover, it is possible to employ REX for the computations of optimal designs on continuous (infinite) design spaces. One approach is to choose a large number of random points inside the continuous design space, use the speed of REX for unstructured design spaces to construct the optimal approximate design on the finite sub-sample, and fine-tune the design by using standard routines of constrained continuous optimization. However, a numerical and theoretical analysis of this methodology is beyond the scope of this paper.
\bigskip

\textbf{Acknowledgments} The work of the first two authors was supported by Grant Number 1/0521/16 from the Slovak Scientific Grant Agency (VEGA). The last author acknowledges support through the KAUST baseline research funding scheme.

%%%%%%%%%%%%%%%%%%%%%%%%%%%%%%%%

%%%%%%%%%%%%%%%%%%%%%%%%%%%%%%%%%%%%%%%%%%%%%%
\appendix

\section{Optimal Step-length and Convergence}\label{appendix}

Let $\w \in \Xi$, $u,v \in \X$, and let $\Phi: \mathcal{S}^m_+ \to [0,\infty)$ be a criterion of design optimality. The \emph{optimal step-length} of weight exchange in the design $\w$ between design points $u$ and $v$ is any
\begin{equation}\label{exchange}
  \alpha^*_{u,v}(\w) \in \argmax\{\Phi(\M[\w+\alpha (\e_v-\e_u)]):\alpha \in [-w_v,w_u]\}.
\end{equation}
In accord with the definition \eqref{exchange}, we provide an optimal step-length formula for any design $\w \in \Xi_R$. We also prove the convergence of the REX algorithm for $D$-optimality.

\subsection{D-optimality: Optimal Step-length and a Proof of Convergence of REX}

Let $\w \in \Xi_R$ and $u,v \in \X$. If $w_u=w_v=0$, the optimal step-length is trivially $\alpha^*_{u,v}(\w)=0$. Therefore, we can assume that at least one of the weights $w_u$ and $w_v$ is strictly positive. We use the notation
\begin{equation*}
  d_{u,v}(\w):=\f'(u)\M^{-1}(\w)\f(v).
\end{equation*}
For $D$-optimality, it has been shown (see \cite{Bohning}, \cite{Yu}) that an optimal choice of the step-length in \eqref{exchange} is as follows. 
\bigskip

If $\f(u)$ and $\f(v)$ are linearly independent, then\footnote{Note that if $\f(u)$ and $\f(v)$ are linearly independent, the Cauchy-Schwarz inequality implies $d_u(\w)d_v(\w) > d^2_{u,v}(\w)$.}
\begin{equation}\label{eq:DoptBstepA}
  \alpha^*_{u,v}(\w)=\min\left\{w_u,\max\left\{-w_v,\frac{d_v(\w)-d_u(\w)}{2[d_u(\w)d_v(\w)-d^2_{u,v}(\w)]}\right\}\right\}.
\end{equation}
If $\f(u)$ and $\f(v)$ are linearly dependent, we can\footnote{For $d_u(\w) = d_v(\w)$, the choice of the optimal $\alpha^*_{u,v}(\w)$ is not unique.} set
\begin{equation}\label{eq:DoptBstepB}
\alpha^*_{u,v}(\w)=\begin{cases}
w_u & \mbox{ if } d_u(\w) < d_v(\w),\\
0 & \mbox{ if } d_u(\w) = d_v(\w),\\
-w_v & \mbox{ if } d_u(\w) > d_v(\w).
\end{cases}
\end{equation}
A $D$-optimal vertex-exchange step for $\w$ between design points $u,v$ is then
\begin{equation*}
  T^D_{u,v}(\w):=\w+\alpha^*_{u,v}(\w)(\e_v-\e_u).
\end{equation*}
The $D$-optimal vertex exchange step is called nullifying if $\alpha^*_{u,v}(\w)=w_u$ or if $\alpha^*_{u,v}(\w)=-w_v$, that is, if the $u$-th or the $v$-th component of $T^D_{u,v}(\w)$ is equal to $0$. The pair of indices
\begin{eqnarray*}
  u^*(\w) &\in & \argmin\{d_u(\w): u \in \supp(\w)\},\\
  v^*(\w) &\in & \argmax\{d_v(\w): v \in \X\} 
\end{eqnarray*}
will be called the $D$-optimal Bohning's vertex pair and the steplength $\alpha^*_{u,v}(\w)$ for $u=u^*(\w)$, $v=v^*(\w)$, will be called the $D$-optimal Bohning's step. In the following, we will use the well-known facts (i) $d_{u^*(\w)}(\w) \leq m$, (ii) $d_{v^*(\w)}(\w) \geq m/\eff_D(\w)$, and (iii) $\w$ is $D$-optimal if and only if $d_{v^*(\w)}(\w)=m$ (see, e.g., \cite{Atkinson}, 9.2). In (ii), $\eff_D(\w)$ denotes the $D$-efficiency of $\w$ relative to a $D$-optimal design.  

\begin{lemma}\label{lem:Dconv1} 
Let $u,v \in \X$ and $\w \in \Xi_R$. Then 
\begin{eqnarray}
&&\det(\M(\w)) \leq \det(\M[T^D_{u,v}(\w)]), \text{ and } \label{i}\\
&&\big\{ T^D_{u,v}(\w) \neq \w \} \Rightarrow \{ \det(\M(\w))<\det(\M[T^D_{u,v}(\w)]) \big\}. \label{ii}
\end{eqnarray}
Moreover, if $\w$ is not $D$-optimal and $u=u^*(\w)$, $v=v^*(\w)$ is the $D$-optimal Bohning's vertex pair, then $\det(\M(\w)) < \det(\M[T^D_{u,v}(\w)])$.
\end{lemma}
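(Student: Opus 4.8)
The plan is to reduce the entire statement to the elementary analysis of a single real quadratic. Fix $\w\in\Xi_R$, write $\M=\M(\w)$ (positive definite, since $\w$ is regular) and abbreviate $d_u=d_u(\w)$, $d_v=d_v(\w)$, $d_{u,v}=d_{u,v}(\w)$. Because $\M(\w+\alpha(\e_v-\e_u))=\M+\alpha\f(v)\f'(v)-\alpha\f(u)\f'(u)$ is a rank-two modification of $\M$, Sylvester's determinant identity applied to $P=[\,\f(v)\ \ \f(u)\,]\in\R^{m\times 2}$ yields
\[
\det\!\big(\M(\w+\alpha(\e_v-\e_u))\big)=\det(\M)\,h(\alpha),\qquad
h(\alpha):=1+\alpha(d_v-d_u)-\alpha^2\big(d_ud_v-d_{u,v}^2\big),
\]
so in particular $h(0)=1$. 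Since $\Phi_D$ is a strictly increasing function of the determinant on the positive definite cone, the optimal step-length $\alpha^*_{u,v}(\w)$ from \eqref{exchange} is precisely a maximiser of $h$ over $[-w_v,w_u]$, and $\det(\M[T^D_{u,v}(\w)])=\det(\M)\,h(\alpha^*_{u,v}(\w))$. So the lemma becomes a collection of statements about maximising $h$ on that interval.

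The first step after this is to observe that $h$ is concave. By the Cauchy--Schwarz inequality applied to $\f(u),\f(v)$ in the inner product $(x,y)\mapsto x'\M^{-1}y$ one has $d_{u,v}^2\le d_ud_v$, with equality exactly when $\f(u),\f(v)$ are linearly dependent; hence the leading coefficient $-(d_ud_v-d_{u,v}^2)$ of $h$ is $\le 0$. Thus $h$ is a strictly concave parabola when $\f(u),\f(v)$ are linearly independent and an affine function when they are dependent, and in either case its maximiser on $[-w_v,w_u]$ is the quantity given by \eqref{eq:DoptBstepA}--\eqref{eq:DoptBstepB} (the projection of the vertex onto the interval, respectively the appropriate endpoint in the monotone affine case). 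Since $0\in[-w_v,w_u]$, we get $h(\alpha^*_{u,v}(\w))\ge h(0)=1$, and multiplying by $\det(\M)>0$ gives \eqref{i}.

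Next I would prove \eqref{ii} by contradiction: suppose $T^D_{u,v}(\w)\neq\w$, i.e.\ $\alpha^*:=\alpha^*_{u,v}(\w)\neq 0$, yet $h(\alpha^*)=1$. Then the concave function $h$ attains its maximum over $[-w_v,w_u]$ at the two distinct points $0$ and $\alpha^*$, hence is constant (equal to $1$) on the whole segment joining them; being a polynomial of degree at most $2$, it is therefore identically $1$, which forces both $d_v=d_u$ and $d_ud_v-d_{u,v}^2=0$. The latter equality makes $\f(u),\f(v)$ linearly dependent, and then the middle case $d_u=d_v$ of \eqref{eq:DoptBstepB} prescribes $\alpha^*_{u,v}(\w)=0$ --- contradicting $\alpha^*\neq 0$. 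Hence $h(\alpha^*)>1$, which is exactly \eqref{ii}.

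Finally I would treat the ``moreover'' part. Take $u=u^*(\w)$, $v=v^*(\w)$ and assume $\w$ is not $D$-optimal. The facts recalled before the lemma give $d_u\le m$ and $d_v\ge m$, and $d_v=m$ would make $\w$ $D$-optimal; hence $d_v>m\ge d_u$, so $d_v>d_u$ strictly. Also $w_u>0$ because $u\in\supp(\w)$. If $\f(u),\f(v)$ are linearly dependent, then $d_u<d_v$ puts us in the first case of \eqref{eq:DoptBstepB}, giving $\alpha^*_{u,v}(\w)=w_u>0$. If they are linearly independent, then $d_ud_v-d_{u,v}^2>0$ (strict Cauchy--Schwarz) together with $d_v>d_u$ makes the vertex $\alpha_0:=(d_v-d_u)/\big(2(d_ud_v-d_{u,v}^2)\big)$ strictly positive, so, since $\alpha_0>0\ge -w_v$, formula \eqref{eq:DoptBstepA} gives $\alpha^*_{u,v}(\w)=\min\{w_u,\alpha_0\}>0$. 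In either case $T^D_{u,v}(\w)\neq\w$, and \eqref{ii} delivers the strict inequality $\det(\M(\w))<\det(\M[T^D_{u,v}(\w)])$. The only point requiring a little care is keeping the linearly-dependent branch in step with the tie-breaking convention of \eqref{eq:DoptBstepB}; everything else is routine one-dimensional optimisation of a concave quadratic over an interval, so I do not expect a genuine obstacle here.
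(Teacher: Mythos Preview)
Your proof is correct. The overall architecture matches the paper's --- (i) is immediate from the definition of $\alpha^*$, and the ``moreover'' part is deduced from (ii) by showing $\alpha^*>0$ whenever $d_{v^*}>d_{u^*}$ and $w_{u^*}>0$ --- but your argument for (ii) takes a different route. The paper argues that $\alpha^*\neq 0$ forces $d_u\neq d_v$ via the step-length rules, then computes the directional derivative of $\log\det(\cdot)$ at $\M(\w)$ toward $\M[T^D_{u,v}(\w)]$, obtaining $\alpha^*(d_v-d_u)>0$; since $\alpha^*$ maximises the criterion along the segment, this derivative being positive at $0$ yields the strict increase. You instead write down the explicit quadratic $h(\alpha)=\det(\M_\alpha)/\det(\M)$ (which the paper also derives, but only exploits in the \emph{next} lemma) and use a polynomial-identity argument: if $h(\alpha^*)=h(0)$ with $\alpha^*\neq 0$, concavity forces $h\equiv 1$, hence $d_u=d_v$ and linear dependence, contradicting the tie-breaking convention in \eqref{eq:DoptBstepB}. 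Your approach is slightly more self-contained (no gradient of $\log\det$), while the paper's is more readily transferable to criteria without a closed-form determinant ratio; both are clean and equally rigorous, and both ultimately rest on the convention that $\alpha^*_{u,v}(\w)=0$ in the degenerate case $d_u=d_v$ with $\f(u),\f(v)$ dependent --- a point you correctly flag.
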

\begin{proof}
Inequality \eqref{i} follows from the choice of $\alpha^*_{u,v}(\w)$. We will prove \eqref{ii}. If $T^D_{u,v}(\w) \neq \w$, then $\alpha^*_{u,v}(\w) \neq 0$. Assume that $\alpha^*_{u,v}(\w)>0$. Then the rules for $\alpha^*_{u,v}(\w)$ imply $d_u(\w) < d_v(\w)$ and $w_u>0$. Now, to show that $\det(\M(\w)) < \det(\M[T^D_{u,v}(\w)])$, it is enough to verify that the directional derivative of $\log\det(\cdot)$ in $\M(\w)$ in the direction of $ \M[T^D_{u,v}(\w)]$ is strictly positive. Noting that $\M^{-1}$ is the gradient of $\log\det(\cdot)$ in a non-singular $\M$, we obtain that the directional derivative is
\begin{eqnarray*}
  \partial \log\det(\M(\w), \M[T^D_{u,v}(\w)])
  &=& \tr(\M^{-1}(\w)(\M[T^D_{u,v}(\w)]-\M[\w]))\\
  &=&\tr(\M^{-1}(\w)[\alpha^*_{u,v}(\w)(\f(v)\f'(v)-\f(u)\f'(u))])\\
  &=&\alpha^*_{u,v}(\w)(d_{v}(\w)-d_{u}(\w))>0.
\end{eqnarray*}
For the case $\alpha^*_{u,v}(\w) < 0$, the strict inequality in \eqref{ii} can be proved analogously.
 
Finally, we prove the last statement of the theorem. If $\w \in \Xi_R$ is not optimal, then $d_{v^*(\w)}(\w)>m$, i.e., (i) implies that $d_{u^*(\w)}(\w)<d_{v^*(\w)}(\w)$. However, $u^*(\w) \in \supp(\w)$, therefore $w_{u^*(\w)}>0$, and the formulas for the optimal step-length imply $\alpha^*_{u,v}(\w)>0$, i.e., $T^D_{u,v}(\w) \neq \w$. We can close the proof by using \eqref{ii}. 
\end{proof}

Recall that for $\w \in \Xi_R$ a Bohning's pair $u=u^*(\w),v=v^*(\w) \in \X$ is called nullifying if $\alpha^*_{u,v}(\w)$ is equal to $-w_v$ or $w_u$. 

\begin{lemma}\label{lem:Dconv2}
  For any $0<\epsilon<1$, there exists $K_\epsilon<\infty$ such that for all $\w \in \Xi_R$ satisfying $\eff_D(\w) \in [\epsilon,1)$ and a non-nullifying $D$-optimal Bohning's pair $u=u^*(\w)$, $v=v^*(\w)$, we have 
 \begin{equation*}
   \eff_D\left(T^D_{u,v}(\w) \right) \geq \eff_D(\w)\left[1+\frac{(\eff^{-1}_D(\w)-1)^2}{K_\epsilon}\right]^{1/m}.
 \end{equation*}  
\end{lemma}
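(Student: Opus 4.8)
The plan is to reduce the statement to a sharp estimate of the determinant ratio $\det(\M[T^D_{u,v}(\w)])/\det(\M(\w))$, and then to control the three quantities entering that ratio by facts (i)--(iii) and a compactness argument.

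First, since $\Phi_D$ is positively homogeneous, the definition of efficiency gives
\[
\frac{\eff_D(T^D_{u,v}(\w))}{\eff_D(\w)}=\left(\frac{\det(\M[T^D_{u,v}(\w)])}{\det(\M(\w))}\right)^{1/m},
\]
so it suffices to exhibit $K_\epsilon<\infty$ with $\det(\M[T^D_{u,v}(\w)])/\det(\M(\w))\ge 1+(\eff_D^{-1}(\w)-1)^2/K_\epsilon$, and then raise to the power $1/m$. Next I would observe that under the hypotheses $\f(u)$ and $\f(v)$ must be linearly independent: otherwise \eqref{eq:DoptBstepB} together with the non-nullifying assumption would force $\alpha^*_{u,v}(\w)=0$ and $d_u(\w)=d_v(\w)$, hence $d_{v^*(\w)}(\w)=d_{u^*(\w)}(\w)\le m$ by fact (i), which by fact (iii) would make $\w$ $D$-optimal, contradicting $\eff_D(\w)<1$. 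Therefore the step uses the interior maximizer $\alpha^*_{u,v}(\w)=[d_v(\w)-d_u(\w)]/(2[d_u(\w)d_v(\w)-d^2_{u,v}(\w)])$ from \eqref{eq:DoptBstepA}.

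Then I would compute the ratio exactly. Writing the rank-two update $\M[T^D_{u,v}(\w)]=\M(\w)+\alpha^*_{u,v}(\w)(\f(v)\f'(v)-\f(u)\f'(u))$ and applying the matrix-determinant lemma to the $2\times2$ capacitance matrix built from $d_u(\w),d_v(\w),d_{u,v}(\w)$, a short calculation with the explicit $\alpha^*_{u,v}(\w)$ yields
\[
\frac{\det(\M[T^D_{u,v}(\w)])}{\det(\M(\w))}=1+\frac{(d_v(\w)-d_u(\w))^2}{4[d_u(\w)d_v(\w)-d^2_{u,v}(\w)]},
\]
the denominator being positive by Cauchy--Schwarz. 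For the numerator, facts (i) and (ii) give $d_v(\w)-d_u(\w)\ge m/\eff_D(\w)-m=m(\eff_D^{-1}(\w)-1)>0$. For the denominator, $d^2_{u,v}(\w)\ge0$ and $d_u(\w)\le m$ bound it above by $4m\,d_v(\w)$, so it remains to bound $d_v(\w)=\max_{x}d_x(\w)$ uniformly over all $\w$ with $\eff_D(\w)\ge\epsilon$. Here I would use that $\{\w\in\Xi:\eff_D(\w)\ge\epsilon\}$ is a closed, hence compact, subset of the simplex on which $\M(\w)$ is nonsingular, so $\w\mapsto\max_x\f'(x)\M^{-1}(\w)\f(x)$ is continuous there and attains a finite maximum $C_\epsilon$. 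Combining the two bounds shows the ratio is at least $1+m(\eff_D^{-1}(\w)-1)^2/(4C_\epsilon)$, so $K_\epsilon:=4C_\epsilon$ works.

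The main obstacle is precisely the uniform upper bound $d_v(\w)\le C_\epsilon$: pointwise $d_v(\w)$ blows up as $\M(\w)$ degenerates, so the hypothesis $\eff_D(\w)\ge\epsilon$ must enter in an essential way. The compactness argument handles this cleanly; alternatively one can be explicit, since $\tr\M(\w)\le\max_x\|\f(x)\|^2$ bounds all eigenvalues of $\M(\w)$ from above while $\det\M(\w)\ge\epsilon^m$ times the optimal determinant bounds the smallest eigenvalue from below, which yields an explicit $C_\epsilon$. The remaining ingredients --- the determinant identity and the arithmetic with facts (i)--(iii) --- are routine.
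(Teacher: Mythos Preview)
Your proposal is correct and follows essentially the same approach as the paper: both establish linear independence of $\f(u),\f(v)$ from the non-nullifying assumption combined with facts (i) and (iii), compute the determinant ratio via the rank-two update to obtain $1+(d_v-d_u)^2/(4[d_ud_v-d_{u,v}^2])$, bound the numerator from below by $m^2(\eff_D^{-1}(\w)-1)^2$ using facts (i) and (ii), and bound the denominator uniformly via continuity on the compact set $\{\w\in\Xi:\eff_D(\w)\ge\epsilon\}$. The only cosmetic difference is that the paper applies compactness directly to $4m^{-2}[d_{\tilde u}(\w)d_{\tilde v}(\w)-d_{\tilde u,\tilde v}^2(\w)]$ over all pairs $(\tilde u,\tilde v)$, whereas you first reduce $d_ud_v-d_{u,v}^2\le m\,d_v(\w)$ and then bound $\max_x d_x(\w)$ by compactness; your explicit eigenvalue alternative for $C_\epsilon$ is a bonus not present in the paper.
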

\begin{proof}
Let $\tilde{u},\tilde{v} \in \X$ and $D_{\tilde{u},\tilde{v}}(\w):=d_{\tilde{u}}(\w)d_{\tilde{v}}(\w)-d^2_{\tilde{u},\tilde{v}}(\w)$. Clearly, $4m^{-2}D_{\tilde{u},\tilde{v}}(\w)$ is a continuous function of $\w$ on the compact $\Xi_\epsilon:=\{\w \in \Xi: \eff_D(\w) \in [\epsilon, 1]\} \subset\Xi_R$. Therefore, it is bounded on $\Xi_\epsilon$ from above by some constant $K_{\tilde{u},\tilde{v},\epsilon}<\infty$. Set $K_\epsilon:=\max_{\tilde{u},\tilde{v} \in \X} K_{\tilde{u},\tilde{v},\epsilon}$, which is clearly finite.

Assume first that $\f(u)$ and $\f(v)$ are linearly dependent. Then, the Bohning's step can only be non-nullifying if $d_u(\w)=d_v(\w)$. However, this cannot happen because $d_u(\w) \leq m$ from fact (i) and since $\eff_D(\w)<1$, $d_v(\w)>m$ from fact (iii).

Therefore, $\f(u)$ and $\f(v)$ are linearly independent. Then, since the $D$-optimal Bohning's step given by \eqref{eq:DoptBstepA} is assumed to be non-nullifying, we have $\alpha^*_{u,v}(\w)=[d_v(\w)-d_u(\w)]/[2D_{u,v}(\w)]$. From the matrix determinant lemma and the Sherman-Morrison formula, it is straightforward to show that
for any positive definite $m \times m$ matrix $\M$, any $\alpha \in \R$ and $\x, \y \in \R^m$ 
\begin{eqnarray}\label{eq:detchange}
&&\det(\M+\alpha(\y\y'-\x\x'))=\nonumber \\
&&\det(\M)\left[(1+\alpha\y'\M^{-1}\y)(1-\alpha\x'\M^{-1}\x) + \alpha^2(\y'\M^{-1}\x)^2 \right].
\end{eqnarray} 
Setting $\alpha=\alpha^*_{u,v}(\w)$, $\y=\f(v)$, $\x=\f(u)$ and $\M=\M(\w)$ in \eqref{eq:detchange} gives
\begin{equation*}
  \det(\M(T^D_{u,v}(\w)))=\det(\M(\w))\left(1+\frac{[d_v(\w)-d_u(\w)]^2}{4D_{u,v}(\w)}\right).
\end{equation*}  
  Recalling that $4D_{u,v}(\w)<m^2K_{\epsilon}$, $d_u(\w) \leq m$ and $d_v(\w) \geq m/\eff_D(\w)$, we obtain the inequality from the lemma.
\end{proof}

\begin{theorem}
  The algorithm REX with the step-length rule defined by \eqref{eq:DoptBstepA} and \eqref{eq:DoptBstepB} converges to a $D$-optimal design in the sense that the sequence of $D$-criterion values of designs produced by the algorithm converges to the $D$-optimal value $\Phi^*$.\footnote{The convergence is the sure convergence of random variables.}\footnote{The $D$-optimal value $\Phi^*$ is defined as $\Phi_D(\M^*)$, where $\M^*$ is the $D$-optimal information matrix.} 
\end{theorem}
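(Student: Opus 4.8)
The plan is to combine the monotonicity of REX with Lemmas~\ref{lem:Dconv1}--\ref{lem:Dconv2} and a compactness argument. Every $\Phi$-optimal exchange carried out by REX --- the LBE step \eqref{eqn:LBE} as well as each exchange of the subspace step --- can only increase $\det(\M(\cdot))$ by Lemma~\ref{lem:Dconv1}\eqref{i}, so the sequence $\det(\M(\w^k))$ is non-decreasing and bounded above by $\det(\M^*)$; hence it converges, and equivalently $\eff_D(\w^k)\uparrow\epsilon^\infty$ for some $\epsilon^\infty\in(0,1]$ (note that $\eff_D(\w^0)>0$ since $\w^0$ is regular, and $\eff_D$ is continuous on $\Xi$). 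It suffices to exclude $\epsilon^\infty<1$, so assume it. Then $\eff_D(\w^k)\in[\eff_D(\w^0),\epsilon^\infty]$ for all $k$, whence by facts (i)--(iii) recalled above $d_{u^*(\w^k)}(\w^k)\le m<m/\epsilon^\infty\le d_{v^*(\w^k)}(\w^k)$ for every $k$.

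Next I would classify the iterations according to whether the LBE step is nullifying. If the LBE were non-nullifying for infinitely many $k$, then Lemma~\ref{lem:Dconv2} (with $\epsilon=\eff_D(\w^0)$, using $\eff^{-1}_D(\w^k)-1\ge(\epsilon^\infty)^{-1}-1>0$) would produce a fixed factor $c>1$ with $\det(\M(\w^{k+1}))\ge c\,\det(\M(\w^k))$ for each such $k$, forcing $\det(\M(\w^k))\to\infty$ --- impossible. So for all large $k$ the LBE step is nullifying, i.e.\ $\alpha^*_{u^*,v^*}(\w^k)=w^k_{u^*(\w^k)}$ (it cannot equal $-w^k_{v^*(\w^k)}$ because $\alpha^*_{u^*,v^*}(\w^k)>0$, exactly as in the proof of Lemma~\ref{lem:Dconv1}). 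Applying \eqref{eq:detchange} with $\alpha=w^k_{u^*(\w^k)}$, and using that a nullifying optimal step lies on the increasing branch of the associated quadratic together with $d_{v^*(\w^k)}(\w^k)-d_{u^*(\w^k)}(\w^k)\ge m((\epsilon^\infty)^{-1}-1)$, one obtains
\begin{equation*}
\det(\M(\w^{k+1}))\ \ge\ \det(\M(\w^k))\Bigl(1+\tfrac{m}{2}\bigl((\epsilon^\infty)^{-1}-1\bigr)\,w^k_{u^*(\w^k)}\Bigr),
\end{equation*}
so boundedness of the determinants forces $\sum_k w^k_{u^*(\w^k)}<\infty$, in particular $w^k_{u^*(\w^k)}\to0$. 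Moreover, in this nullifying regime every exchange executed by REX either deletes a support point, swaps one support point for a non-support point, or leaves the support unchanged, so $|\supp(\w^k)|$ is eventually non-increasing, hence eventually constant.

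It then remains to contradict the situation now in force: eventually every LBE is nullifying, $w^k_{u^*(\w^k)}\to0$, yet $\eff_D(\w^k)\le\epsilon^\infty<1$. For this I would pass to a subsequence along which $\w^k$, together with all the (boundedly many) intermediate designs obtained after each single exchange within an iteration, converge, and along which the relevant index sets and exchanged pairs stabilise. Summability of the determinant increments makes all of these limits share one common determinant, so the limit $\w^\infty$ is a regular, non-$D$-optimal design with $d_{u^*(\w^\infty)}(\w^\infty)\le m<d_{v^*(\w^\infty)}(\w^\infty)$; this strict separation renders the step-length formulas \eqref{eq:DoptBstepA}--\eqref{eq:DoptBstepB} continuous in a neighbourhood of $\w^\infty$ (in \eqref{eq:DoptBstepA} because $D_{u,v}(\cdot)>0$ on $\Xi_R$ for linearly independent regressors, in \eqref{eq:DoptBstepB} because the ``$d_u<d_v$'' branch is locally constant or, when $\f(u)=\pm\f(v)$, the exchange leaves $\M$ unchanged). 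Consequently, by Lemma~\ref{lem:Dconv1}\eqref{ii}, every exchange that REX performs arbitrarily close to $\w^\infty$ is asymptotically trivial. One then wants to contradict this: the pair $\bigl(u^*(\w^\infty),\hat v\bigr)$, where $\hat v$ is the point to which the maximal-variance vertices $v^*$ stabilise, lies in $S_{\text{support}}\times S_{\text{greedy}}$ of the relevant iterations and is a Bohning vertex pair of $\w^\infty$, so by the last assertion of Lemma~\ref{lem:Dconv1} its $\Phi$-optimal exchange strictly increases $\det(\M(\cdot))$ at $\w^\infty$.

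I expect the genuine obstacle to lie entirely in this final step. In the nullifying regime REX performs a subspace exchange \emph{only} when it happens to be nullifying, and the design at which a given pair is processed depends on the random permutation and on all earlier exchanges of the same iteration; one therefore has to argue that --- regardless of the ordering and of which exchanges get skipped, and using that $|\supp(\w^k)|$ has already stabilised so that every performed subspace exchange is a ``swap'' --- some exchange that REX actually executes near $\w^\infty$ still raises the determinant by an amount bounded away from zero, which then contradicts the vanishing of the per-iteration (hence per-exchange) increments. Once that point is settled, the remaining ingredients (monotonicity, Lemmas~\ref{lem:Dconv1}--\ref{lem:Dconv2}, and the compactness/continuity scaffolding) are routine.
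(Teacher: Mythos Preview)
Your bifurcation and the treatment of the ``infinitely many non-nullifying LBE'' branch coincide with the paper's Case~2. The difference lies entirely in the other branch, where you pursue a compactness/continuity argument and correctly flag an obstacle you cannot close; the paper instead uses a short combinatorial observation that finishes the proof in a few lines.

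The idea you are missing is this. Once every LBE is nullifying, REX (by construction) executes only nullifying exchanges in the subspace step as well. A non-trivial nullifying exchange either reduces $|\supp(\w)|$ by one, or is a pure swap in which the entire weight of one support point is transferred to a point of current weight zero. Since $|\supp(\w)|\ge m$, after finitely many reductions the support size freezes; from then on every executed exchange is a swap, and a swap leaves \emph{the multiset of non-zero weight values} unchanged. The paper now concludes directly: $\X$ is finite, so only finitely many designs share a fixed multiset of weights, hence the sequence of designs must revisit an earlier one --- contradicting the last statement of Lemma~\ref{lem:Dconv1}, which says that the LBE at any non-optimal design strictly increases the determinant.

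Incidentally, the same multiset observation would also rescue your own route without any compactness. You have already shown $w^k_{u^*(\w^k)}\to 0$; but once the multiset is frozen, $w^k_{u^*(\w^k)}$ takes values in a fixed finite set of strictly positive reals, which is an immediate contradiction. Your detour through a limit design $\w^\infty$ is therefore unnecessary, and the obstacle you identify --- that in the nullifying regime the one exchange you would like to invoke may be non-nullifying near $\w^\infty$ and hence skipped by REX --- never needs to be confronted. That obstacle is, by the way, genuine: since $w^k_{u^*(\w^k)}\to 0$, the stabilised index $\hat u$ has $w^\infty_{\hat u}=0$, so the relevant first index at $\w^\infty$ is some $u^*(\w^\infty)\neq\hat u$ with $w^\infty_{u^*(\w^\infty)}>0$; if the unconstrained optimal step for $(u^*(\w^\infty),\hat v)$ at $\w^\infty$ lies strictly inside $(0,w^\infty_{u^*(\w^\infty)})$, then by continuity this exchange is non-nullifying at the nearby intermediate designs and REX simply skips it.
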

\begin{proof}
Let the initial design of REX be $\w^0 \in \Xi_R$, let $\epsilon := \eff_D(\w^0)$, and let $(\w^i)_{i=1}^\infty$ be the sequence of designs that REX generates by the leading $D$-optimal Bohning's exchanges at iterations $1,2,\ldots$. Because all the transformations used by REX are non-decreasing with respect to $\Phi_D$, the sequence $(\Phi_D(\M(\w^i)))_{i=1}^\infty$ has a limit, say $\Phi^+$. Assume that $\Phi^+$ is not equal to the $D$-optimal value $\Phi^*$ of the problem and let $\eff_+:=\Phi^+/\Phi^* <1$. Now, inspired by the proof strategy of Bohning (\cite{Bohning}), we will split the proof into two cases.\footnote{We will deal with Case 1 differently than Bohning.} In each case, we show a contradiction, which then implies the claim of the theorem.

Case 1: Assume that there is only a finite number of non-nullifying leading $D$-optimal Bohning's exchanges. Each nullifying exchange can only decrease the size of the support of the current design, or it can keep the size of the support constant. Since there can only be a finite number of support-reducing nullifying exchanges, there is some index $i$ such that all the exchanges performed after the $i$-th iteration are nullifying and keep a constant size of the support. Hence, after the $i$-th iteration, the algorithm does not alter the \emph{set} of all non-zero weights. At the same time, REX is designed such that after the $i$-th iteration, it will perform only nullifying exchanges (also in the non-leading exchanges). Since $\X$ is finite, this means that the generated designs must return to one of the previous designs after a finite number of steps. This is impossible because according to Lemma \ref{lem:Dconv1}, each $D$-optimal Bohning's step (even if it is nullifying) strictly increases the criterion value of a sub-$D$-optimal design.

Case 2: Assume that there is an infinite number of non-nullifying leading $D$-optimal Bohning's steps, at iterations $n_1, n_2, \ldots$. Then, Lemmas \ref{lem:Dconv1} and \ref{lem:Dconv2} imply that
 \begin{equation}\label{eq:divergence}
   \eff_D(\w^{n_k}) \geq \epsilon \left[1+\frac{(\eff_+^{-1}-1)^2}{K_\epsilon}\right]^{k/m}
 \end{equation}    
for all $k=1,2,\ldots$ and some positive real numbers $\epsilon, K_\epsilon$. This is impossible if we assume that $\eff_+<1$ because the RHS of  \eqref{eq:divergence} converges to infinity with $k \to \infty$ but the efficiency of any design compared to the $D$-optimal design is bounded by $1$ from above.
\end{proof}

%%%%%%%%%%%%%%%%%%%%%%%%%
\subsection{A-optimality: Optimal Step-length}\label{Astep}

Let $u,v \in \X$ be a fixed pair of design points and let $\w \in \Xi_R$ be a fixed design. Let at least one of the weights $w_u$ and $w_v$ be strictly positive. 
\bigskip

Denote $\f_u :=\f(u)$, $\f_v :=\f(v)$, $\M:=\M(\w)$ and $\V:=\M^{-1}(\w)$. For $\alpha$ in the open interval $I=(-w_v,w_u)$, the matrix $\M+\alpha \f_v\f'_v$ as well as the information matrix
\begin{equation*}
  \M_\alpha:=\M+\alpha \f_v\f'_v-\alpha\f_u\f'_u
\end{equation*}
are positive definite, and we can use the Woodbury formula to obtain\footnote{Note that \eqref{eq:detchange} implies $1+\alpha \f'_v\V\f_v-\alpha \f'_u\V\f_u-\alpha^2 \f'_u\V\f_u \f'_v\V\f_v+\alpha^2 (\f'_u\V\f_v)^2$ is strictly positive for $\alpha \in I$.}
\begin{equation*}
\M^{-1}_\alpha=\V+
\frac{\V[\alpha\f_u\f'_u-\alpha \f_v\f'_v-\alpha^2 \f'_u\V\f_v (\f_u\f'_v+\f_v\f'_u)+\alpha^2 (\f'_v\V\f_v\f_u\f'_u+\f'_u\V\f_u\f_v\f'_v)]\V}{1+\alpha \f'_v\V\f_v-\alpha \f'_u\V\f_u-\alpha^2 \f'_u\V\f_u \f'_v\V\f_v+\alpha^2 (\f'_u\V\f_v)^2 },
\end{equation*}
\begin{equation*}
-\tr\M^{-1}_\alpha=-\tr\V-\alpha\frac{\f'_u\V^2\f_u-\f'_v\V^2\f_v- 2\alpha \f'_u\V\f_v\f'_u\V^2\f_v+\alpha \f'_v\V\f_v\f'_u\V^2\f_u+\alpha \f'_u\V\f_u\f'_v\V^2\f_v}
{1+\alpha \f'_v\V\f_v-\alpha \f'_u\V\f_u-\alpha^2\f'_u\V\f_u \f'_v\V\f_v+\alpha^2(\f'_u\V\f_v)^2}.
\end{equation*}

Denote
\begin{equation*}
d_v=\f'_v\V\f_v, \: d_u=\f'_u\V\f_u, \: d_{uv}=\f'_u\V\f_v.
\end{equation*}
\begin{equation*}
a_v:=\f'_v\V^2\f_v, \: a_u:=\f'_u\V^2\f_u, \: a_{uv}:=\f'_u\V^2\f_v,
\end{equation*}
and
\begin{equation}\label{eq:constants}
A=a_v-a_u, \: B=2d_{uv}a_{uv}-d_u a_v-d_v a_u, \: C=d_v-d_u, \: D=d_u d_v-d_{uv}^2.
\end{equation}
Then
\begin{equation*}
-\tr(\M^{-1}_\alpha)=-\tr(\V)+\frac{\alpha A+\alpha^2 B}{1+\alpha C-\alpha^2 D} =: h(\alpha),
\end{equation*}

The function $h$ is smooth and concave on $I$. Additionally, if $\f_u \neq \pm \f_v$, then it is also \emph{strictly} concave, which follows from the smoothness and strict concavity of $-\tr(\M^{-1})$ on the set of all positive definite matrices $\M$ (see \cite{Pazman}, Prop. IV.3).

 Note that the derivative of $h(\alpha)$ in $\alpha \in I$ is
\begin{equation}\label{eq:hder}
\frac{\mathrm{d} h(\alpha)}{\mathrm{d} \alpha}=
\frac{A+2\alpha B+\alpha^2(AD+BC)}{(1+\alpha C-\alpha^2 D)^2}.
\end{equation}
We will find $\alpha^* \in \bar{I}:=[-w_v,w_u]$ maximizing the continuous extension $\bar{h}:\bar{I} \to \R \cup \{-\infty\}$ of $h$.

\begin{lemma}\label{eq:lemadiscr}
 $B^2-A(AD+BC) \geq 0$.
\end{lemma}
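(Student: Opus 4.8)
The plan is to recognize $B^2-A(AD+BC)$ as one quarter of the discriminant of the quadratic $N(\alpha)\eqdef(AD+BC)\alpha^2+2B\alpha+A$, which by \eqref{eq:hder} is exactly the numerator of $h'(\alpha)$. Hence the inequality is equivalent to the following dichotomy: either $N$ is degenerate as a quadratic, i.e. $AD+BC=0$, in which case $B^2-A(AD+BC)=B^2\ge0$ and there is nothing to prove; or $N$ has a real root, and a quadratic with a real root has nonnegative discriminant. So it suffices to exhibit a real root of $N$, and the plan is to obtain one as the location of an \emph{interior} maximizer of $h$ on the open interval on which $h$ is finite.

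First I would dispose of the case where $\f_u$ and $\f_v$ are linearly dependent: writing $\f_v=c\f_u$ and substituting into \eqref{eq:constants}, a direct computation gives $B=0$ and $D=0$, whence $B^2-A(AD+BC)=B^2=0$. So assume $\f_u,\f_v$ are linearly independent. Then the Cauchy--Schwarz inequality for the inner product $\langle\x,\y\rangle\eqdef\x'\V\y$ on $\R^m$ gives $D=d_ud_v-d_{uv}^2>0$, so the denominator polynomial $p(\alpha)\eqdef1+\alpha C-\alpha^2 D$ is a downward-opening parabola with $p(0)=1$ and discriminant $C^2+4D>0$; it therefore has two real roots $\alpha_-<0<\alpha_+$ and is strictly positive exactly on $(\alpha_-,\alpha_+)$. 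Applying \eqref{eq:detchange} with $\y=\f_v$, $\x=\f_u$, $\M=\M(\w)$ yields $\det(\M_\alpha)=\det(\M(\w))\,p(\alpha)$; since $\M_0=\M(\w)$ is positive definite and the eigenvalues of $\M_\alpha$ depend continuously on $\alpha$, the number of negative eigenvalues is locally constant on $(\alpha_-,\alpha_+)$ and equals $0$ at $\alpha=0$, so $\M_\alpha$ is positive definite throughout $(\alpha_-,\alpha_+)$, and $h(\alpha)=-\tr(\M_\alpha^{-1})$ is finite and differentiable there.

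Next I would examine the endpoints. As $\alpha\to\alpha_\pm$, the matrix $\M_\alpha$ stays positive definite and bounded while $\det(\M_\alpha)\to0$, so its smallest eigenvalue tends to $0$ and $\tr(\M_\alpha^{-1})\to+\infty$, i.e. $h(\alpha)\to-\infty$. A continuous function on the bounded open interval $(\alpha_-,\alpha_+)$ that tends to $-\infty$ at both endpoints attains its maximum at some interior point $\alpha_0\in(\alpha_-,\alpha_+)$, and there $h'(\alpha_0)=0$, which by \eqref{eq:hder} forces $N(\alpha_0)=0$. If $AD+BC\ne0$, then $N$ is a genuine quadratic with the real root $\alpha_0$, so its discriminant $4B^2-4A(AD+BC)$ is nonnegative, which is precisely the assertion. (Concavity of $h$, although available from the remarks preceding the lemma, is not actually needed for this argument.)

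The step I expect to demand the most care is the endpoint analysis: one must be certain that the maximal interval on which $h$ is finite is \emph{bounded} and that $h$ diverges to $-\infty$ at \emph{both} endpoints rather than at $\pm\infty$, and this is exactly where the strict inequality $D>0$ enters — hence the linearly dependent case, in which $p$ degenerates to an affine function and one of the intervals of positivity is unbounded, has to be peeled off first. A purely algebraic alternative would be to expand $B^2-A(AD+BC)$ in terms of $r\eqdef\V\f_u$ and $s\eqdef\V\f_v$ (so $d_u=r'\M(\w)r$, $a_u=r'r$, etc.) and look for a sum-of-squares certificate, but that computation appears considerably more unpleasant and I would avoid it.
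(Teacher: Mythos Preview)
Your proof is correct and takes a genuinely different route from the paper's. The paper proceeds purely algebraically: it verifies by direct expansion that
\[
B^2-A(AD+BC)=\bigl[(d_u+d_v)^2-4d_{uv}^2\bigr]\bigl[a_ua_v-a_{uv}^2\bigr]+\bigl[d_{uv}(a_u+a_v)-a_{uv}(d_u+d_v)\bigr]^2,
\]
and then observes that both factors of the first summand are nonnegative by Cauchy--Schwarz (applied with the inner products induced by $\V$ and by $\V^2$) together with the AM--GM bound $4d_ud_v\le(d_u+d_v)^2$, while the second summand is a square. In other words, the paper carries out precisely the ``sum-of-squares certificate'' computation you anticipated in your final paragraph and chose to avoid. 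Your argument is analytic rather than algebraic: you interpret $B^2-A(AD+BC)$ as (one quarter of) the discriminant of the numerator $N$ of $h'$, show that $h$ is finite on the bounded open interval $(\alpha_-,\alpha_+)$ where $p>0$ and tends to $-\infty$ at both endpoints, and conclude that $h$ has an interior critical point, forcing $N$ to have a real root. The paper's approach is shorter once the identity is in hand and yields an explicit nonnegative decomposition from which the equality cases can be read off; your approach sidesteps the expansion entirely, explains conceptually \emph{why} the discriminant must be nonnegative, and would transfer to analogous one-parameter perturbations where the explicit identity is harder to guess.
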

\begin{proof}
It is straightforward to verify that $B^2-A(AD+BC)$ is equal to
\begin{equation}\label{eq:discrim}
 \left[(d_u+d_v)^2-4d_{uv}^2\right]\left[a_ua_v-a_{uv}^2\right]+\left[d_{uv}(a_u+a_v)-a_{uv}(d_u+d_v)\right]^2.
\end{equation}
Using the Cauchy-Schwarz inequality and the AM-GM inequality, we obtain $4d_{uv}^2 \leq 4d_ud_v \leq (d_u+d_v)^2$ and $a_{uv}^2 \leq a_ua_v$. Therefore, \eqref{eq:discrim} is non-negative.
\end{proof}

\begin{proposition}
 If $AD=-BC$ and $B \neq 0$ let $\alpha^*_s=-\frac{A}{2B}$, and if $AD \neq -BC$ let
\begin{equation*}
\alpha^*_s=-\frac{B+\sqrt{B^2-A(AD+BC)}}{AD+BC}.
\end{equation*}
If $\alpha^*_s \in I$ then $\alpha^*_s$ maximizes $\bar{h}$ on $\bar{I}$. If $AD=-BC$ and $B=0$ or if $\alpha^*_s \notin I$, let
 \begin{equation}\label{eq:AoptStepN}
    \alpha^*_n=\begin{cases}
    w_u & \mbox{ if } A>0,\\
    0 & \mbox{ if } A=0,\\
    -w_v & \mbox{ if } A<0.
    \end{cases}
\end{equation}
Then, $\alpha^*_n$ maximizes $\bar{h}$ on $\bar{I}$.
\end{proposition}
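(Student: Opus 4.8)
The plan is to read the claim as a one–dimensional concave maximization on the compact interval $\bar I=[-w_v,w_u]$ and to show that its maximizer is either an interior stationary point or else the endpoint forced by the sign of the derivative near the boundary. I would first collect the tools. By \eqref{eq:hder}, on the open interval $I$ we have $h'(\alpha)=q(\alpha)/(1+\alpha C-\alpha^2 D)^2$ with $q(\alpha):=A+2\alpha B+\alpha^2(AD+BC)$, and the denominator is strictly positive on $I$ (the footnote after the Woodbury identity), so $\operatorname{sign}h'=\operatorname{sign}q$ on $I$. Since $h$ is concave on $I$, $h'$ is non-increasing there. The continuous extension $\bar h(\alpha)=-\tr(\M_\alpha^{-1})$ is well defined on all of $\bar I$, valued in $[-\infty,0)$ and equal to $-\infty$ exactly when $\M_\alpha$ is singular, because $\M_\alpha=\M(\w+\alpha(\e_v-\e_u))$ is a genuine information matrix for every $\alpha\in\bar I$. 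Finally, whenever $\alpha^*_s$ is defined it is a real root of $q$ (real by Lemma~\ref{eq:lemadiscr}), one has $q(0)=A$, and $\alpha^*_n\in\bar I$.

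For the first assertion: if $\alpha^*_s\in I$, then $q(\alpha^*_s)=0$ and the denominator does not vanish there, so $h'(\alpha^*_s)=0$. A stationary point of a concave differentiable function on an interval is a global maximizer, so $h(\alpha)\le h(\alpha^*_s)$ for all $\alpha\in I$; letting $\alpha$ tend to the two endpoints (the inequality is automatic when a limit is $-\infty$) gives $\bar h\le h(\alpha^*_s)=\bar h(\alpha^*_s)$ on $\bar I$, so $\alpha^*_s$ maximizes $\bar h$.

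For the second assertion: if $AD=-BC$ and $B=0$ then $q\equiv A$, so $h'$ has the constant sign $\operatorname{sign}(A)$ on $I$, and $\bar h$ is maximized at $w_u$, at $-w_v$, or at $0\in\bar I$ according as $A>0$, $A<0$, or $A=0$ — that is, at $\alpha^*_n$. In the remaining case ($\alpha^*_s$ defined, $\alpha^*_s\notin I$) the crux is to show $q$ has no root in $I$, whence $h'$ has constant sign on $I$. If $q(\alpha_0)=0$ for some $\alpha_0\in I$, then since $h'$ is non-increasing $q$ must pass from $\ge 0$ to $\le 0$ at $\alpha_0$; in the linear subcase ($AD+BC=0$, $B\neq0$) the only root of $q$ is $\alpha^*_s$, contradicting $\alpha^*_s\notin I$; in the quadratic subcase ($AD+BC\neq0$), writing $q(\alpha)=(AD+BC)(\alpha-\alpha^*_s)(\alpha-\tilde\alpha)$ with companion root $\tilde\alpha$ and using that $\alpha-\alpha^*_s$ has a fixed sign near $\alpha_0$, the required $+\to-$ transition at $\alpha_0=\tilde\alpha$ forces $(AD+BC)(\tilde\alpha-\alpha^*_s)<0$, which contradicts the identity $(AD+BC)(\tilde\alpha-\alpha^*_s)=2\sqrt{B^2-A(AD+BC)}\ge0$, while $\alpha_0=\alpha^*_s$ is excluded directly. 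Hence $h'$ has constant sign on $I$, and identifying it from $q(0)=A$ and continuity — and, when $A=0$, from the fact that then $0$ must be an endpoint of $\bar I$ (else $\alpha^*_s=0\in I$) together with $h'$ being non-increasing with boundary limit $0$ — shows $\bar h$ is maximized at $\alpha^*_n$ in every subcase.

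The main obstacle is this last case: one must rule out from the open interval $I$ not only $\alpha^*_s$ but also the companion root of the quadratic $q$, which is precisely where the choice of sign in the formula for $\alpha^*_s$ together with the discriminant bound of Lemma~\ref{eq:lemadiscr} enters; and one must handle the degenerate boundary configurations $w_u=0$ or $w_v=0$, in which $0$ is not interior to $I$, with some care in the subcase $A=0$.
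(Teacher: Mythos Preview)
Your proof is correct and shares the paper's overall architecture: reduce to the sign of $q$ on $I$ via concavity of $h$, accept the stationary point when it lies in $I$, and otherwise show $q$ has no root in $I$ so that the sign of $h'$ is dictated by $A=q(0)$. The difference lies in how the second root of $q$ is excluded in the quadratic subcase with $\alpha^*_s\notin I$. The paper first separates out $\f_u=\pm\f_v$, invokes \emph{strict} concavity to allow at most one stationary point in $I$, and then runs a case split on the signs of $A$ and $AD+BC$ to locate $\alpha_2$ relative to $I$. You instead use only (non-strict) concavity and argue directly from the non-increase of $h'$ that a simple root $\tilde\alpha\in I$ would force $q'(\tilde\alpha)<0$, i.e., $(AD+BC)(\tilde\alpha-\alpha^*_s)<0$, which clashes with the identity $(AD+BC)(\tilde\alpha-\alpha^*_s)=2\sqrt{B^2-A(AD+BC)}\ge 0$. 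This is cleaner and more uniform, and it removes the need to treat $\f_u=\pm\f_v$ separately; it also makes the $A=0$ boundary case explicit, which the paper leaves implicit.

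One small inaccuracy: the parenthetical ``else $\alpha^*_s=0\in I$'' in your $A=0$ discussion is not right as stated, since for $A=0$, $B>0$, $AD+BC\neq 0$ one has $\alpha^*_s=-2/C\neq 0$. The correct justification is already contained in what you proved: $q(0)=A=0$ makes $0$ a root of $q$, and you have just shown $q$ has no root in $I$, hence $0\notin I$. With that replacement the remainder of your $A=0$ analysis (boundary limit of $h'$ equal to $0$, monotonicity of $h'$) goes through verbatim.
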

\begin{proof}
	The case $\f_u=\pm \f_v$ means that the numerator of \eqref{eq:hder} is zero, i.e., any $\alpha_s^* \in \bar{I}$ is a maximizer. Therefore, we will consider the non-trivial case $\f_u \neq \pm \f_v$, which implies that $h$ is strictly concave on $I$. First, note that \eqref{eq:hder} is equal to $A$ for $\alpha=0$.
	
	Let $AD=-BC$, $B\neq 0$ and $\alpha^*_s=-A/(2B) \in I$. Then, based on \eqref{eq:hder}, $\alpha^*_s$ is a stationary point of $h$ in $I$. Therefore, it must maximize $\bar{h}$ on $\bar{I}$.
	
	Let $AD\neq -BC$ and $\alpha^*_s \in I$. Taking Lemma \ref{eq:lemadiscr} into account, we see that the quadratic function $g(\alpha)$ in the numerator of $\eqref{eq:hder}$ has real roots on $\R$ given by
	\begin{equation*}
	\alpha_{1,2}=-\frac{B\pm\sqrt{B^2-A(AD+BC)}}{AD+BC},
	\end{equation*}
	where $\alpha^*_s=\alpha_1$. Since $h$ is strictly concave on $I$, it has at most one stationary point on $I$, which is the unique maximizer on $I$.
	
	If $AD=-BC$, $B=0$ or if $AD=-BC$, $B \neq 0$, $\alpha^*_s \notin I$, then \eqref{eq:hder} has the same sign as $A$ for any $\alpha \in I$, which means that \eqref{eq:AoptStepN} indeed provides a maximizer of $\bar{h}$ on $\bar{I}=[-w_v,w_u]$.
	
	Finally, let $AD \neq -BC$ and $\alpha^*_s \notin I$. It is enough to prove that for $A \neq 0$ none of $\alpha_1, \alpha_2$ belongs to $I$. If so, then the sign of the derivative \eqref{eq:hder} will be equal to the sign of $A$ on the entire interval $I$, which implies \eqref{eq:AoptStepN}. Since we assume that $\alpha_1=\alpha^*_s \notin I$, it is enough to prove that $\alpha_2 \notin I$.
	
	Let $A>0$. Consider mutually exclusive cases a) $AD<-BC$ and b) $AD>-BC$. If a) then  $\alpha_2 \leq 0$, it means that $\alpha_2$ cannot be a stationary point of $h$ on $I$ (since the derivative of $h$ in $0$ is positive and $h$ is concave on $I$), hence $\alpha_2 \notin I$. If b), then $\alpha_2>0$, but $\alpha_2 > \alpha_1 \geq 0$. As $\alpha_1 \notin I$, we have $\alpha_2 \notin I$. If $A < 0$, we can use an analogous argument.
\end{proof}

Thus, we can determine a maximizer $\alpha^*$ of $\bar{h}$ on $\bar{I}$ as follows:
\begin{enumerate}
	\item Set $G:=AD+BC$
	\item If $G=0$ and $B \neq 0$ set $r:=-A/(2B)$. If $-w_v < r < w_u$, return $\alpha^*=r$.
	\item If $G \neq 0$ set $r:=-(B+\sqrt{B^2-AG})/G$. If $-w_v < r < w_u$, return $\alpha^*=r$.
	\item If $A>0$ return $\alpha^*=w_u$, else if $A<0$ return $\alpha^*=-w_v$, else return $\alpha^*=0$.
\end{enumerate}

\end{document}